\documentclass[aps,prd,twocolumn,superscriptaddress,nofootinbib,10pt]{revtex4-2}

\usepackage{amsmath,amssymb,mathrsfs,bm,xcolor}
\usepackage{amsthm}
\usepackage[colorlinks=true,linkcolor=blue,citecolor=blue,urlcolor=blue]{hyperref}
\usepackage{orcidlink}

\newtheorem{proposition}{Proposition}
\newtheorem{theorem}{Theorem}
\newtheorem{corollary}{Corollary}
\newtheorem{remark}{Remark}

\newcommand{\Pln}{\mathcal{P}_{\ell n}}
\newcommand{\Ccross}{\mathcal{C}}
\newcommand{\Lie}{\mathcal{L}}

\begin{document}

\title{Horizon-regular cross-focusing and inner-horizon obstructions in spherical $f(R)$ gravity}

\author{Maickol Muñoz-Palma\orcidlink{0009-0008-1017-4406}}
\email{m.munoz58@ufromail.cl}
\affiliation{Departamento de Ciencias Físicas, Universidad de La Frontera, Casilla 54-D, 4811186 Temuco, Chile.}

\author{Francisco S. N. Lobo\orcidlink{0000-0002-9388-8373
}}
\email{fslobo@ciencias.ulisboa.pt}
\affiliation{Instituto de Astrof\'isica e Ci\^encias do Espa\c{c}o, Faculdade de Ci\^encias da Universidade de Lisboa, Edif\'icio C8, Campo Grande, P-1749-016 Lisbon, Portugal}
\affiliation{Departamento de F\'isica, Faculdade de Ci\^encias da Universidade de Lisboa, Edif\'icio C8, Campo Grande, P-1749-016 Lisbon, Portugal}

\author{Jean B\'aez Cuevas\orcidlink{0000-0002-3308-3362}}
\email{jean.baez@pucv.cl}
\affiliation{
Instituto de F\'isica, Pontificia Universidad Cat\'olica de Valpara\'iso, Casilla 4950, Valpara\'iso, Chile.}

\author{Francisco Tello-Ortiz \orcidlink{0000-0002-7104-5746}}
\email{francisco.tello@ufrontera.cl}
\affiliation{Departamento de Ciencias Físicas, Universidad de La Frontera, Casilla 54-D, 4811186 Temuco, Chile.}

\begin{abstract}
	We formulate a horizon-regular double-null criterion for regular inner
	marginal horizons in spherically symmetric metric $f(R)$ gravity.
	Using normalized outgoing and ingoing radial null vectors
	$\ell^\mu$ and $n^\mu$, with $n^\mu$ affinely parametrized, we
	derive an exact evolution law for the area-weighted outgoing expansion
	$r^2\theta_{(\ell)}$. Its source is controlled by the scalaron
	$F\equiv f_R>0$ and by a mixed quantity $\mathcal{P}_{\ell n}$
	containing matter, scalaron derivatives, and the curvature potential.
	If $\mathcal{P}_{\ell n}\leq F/r^2$ along a regular ingoing null
	segment issuing from a nondegenerate future outer marginal sphere,
	then the outgoing expansion cannot return to zero, and no second
	regular marginal sphere of the same family can occur on that
	generator. Conversely, a nondegenerate future inner marginal sphere
	requires the reverse inequality, so an outer--inner pair necessarily
	entails a source reversal and an exact integral balance. No
	trapped-region assumption is required.
	In the static limit, the criterion reduces to a horizon-regular
	relation involving the radial derivative of the metric function and
	remains valid in the degenerate case under the stated regularity
	conditions. It reproduces the Reissner--Nordstr\"om classification and
	is verified in an exact charged, nonconstant-curvature $f(R)$ black
	hole with a nonconstant scalaron. The resulting Cauchy-horizon
	statement is conditional and applies only when the candidate boundary
	is also a regular nondegenerate future inner marginal horizon.
\end{abstract}

\maketitle

\section{Introduction}
\label{sec:introduction}

Null focusing provides one of the most direct links between local
curvature and the causal structure of spacetime. The Raychaudhuri
equation governs the evolution of null congruences and lies at the
foundation of the classical focusing and singularity
theorems~\cite{Raychaudhuri:1953yv,Hawking:1970zqf,Wald1984}. Its
modified-gravity implications have also been explored, for instance in
inhomogeneous $f(R)$ cosmologies, where the additional curvature
degree of freedom modifies the usual matter-driven focusing
balance~\cite{Chakraborty:2023ork}. In spherical symmetry, the two radial
null expansions provide a natural quasi-local characterization of
trapped, marginal, and untrapped spheres, while their transverse
derivatives determine the future/past and outer/inner trapping-horizon
classification introduced by Hayward~\cite{Hayward:1993wb}. Cross-focusing
is therefore an established ingredient of the dual-null formalism, with
recent warped-spacetime analyses further clarifying its geometric
structure and relation to alternative spherical
decompositions~\cite{Maciel:2024vqg}.

The interior of a black hole presents a qualitatively different problem
from that of its outer horizon. In charged or rotating solutions of
General Relativity, an inner Killing horizon may also constitute a
Cauchy horizon, beyond which evolution from an initial Cauchy surface
ceases to be uniquely determined. Such boundaries are closely connected
with the mass-inflation instability~\cite{Poisson:1990eh,Ori:1991zz},
although their existence, regularity, and nonlinear stability are
logically distinct issues. Motivated by these questions, various
no-inner-horizon and no-Cauchy-horizon results have been established in
the presence of charged scalar hair, nonlinear electrodynamics,
Horndeski interactions, and Gauss--Bonnet
couplings~\cite{Cai:2020wrp,An:2021plu,Devecioglu:2021xug,Devecioglu:2023hmn}. More recently, a general no-inner-horizon criterion
has been obtained for a broad class of static anisotropic black holes in
Einstein gravity~\cite{Peng:2026baq}, while complementary
energy-condition arguments constrain the multiplicity of horizons in
static geometries~\cite{Yang:2021civ}.

Metric $f(R)$ gravity provides a particularly natural setting in which
to extend this programme. Its field equations introduce the scalaron
$F\equiv f_R$, whose derivatives enter the curvature projections
together with the matter sector~\cite{Sotiriou:2008rp,DeFelice:2010aj,Nojiri:2010wj,Capozziello:2011et}. Beyond its extensive applications to
cosmology and inflation~\cite{Sebastiani:2015kfa}, the theory
admits a rich black-hole sector containing reconstructed vacuum
solutions, nonconstant-curvature geometries, rotating configurations,
and charged matter-supported
solutions~\cite{Multamaki:2006zb,Sebastiani:2010kv,Karakasis:2021rpn,Chaturvedi:2023ctn,Hurtado:2020gic,Tang:2019qiy}.
Topological and thermodynamic properties of $R^2$ and $F(R)$ black
holes have likewise been studied~\cite{Cognola:2015wqa,EslamPanah:2023pce,Xu:2024yis}. Of particular relevance here, double-null and Kruskal-type
simulations indicate that the scalaron may undergo strong evolution in
black-hole interiors, approach limiting values, or develop singular
behaviour near inner-horizon or central
regions~\cite{Hwang:2011kg,Guo:2013dha,Guo:2015ira}.

A purely static-coordinate analysis is insufficient for the
inner-horizon problem~\cite{Munoz-Palma:2026dbd}. Between an outer event horizon and a
static inner Cauchy horizon, the usual temporal and radial coordinates
exchange causal character, while a zero of a static metric function
alone does not provide a global characterization of a Cauchy horizon.
A physically appropriate analysis should therefore be formulated in
terms of null quantities that remain regular across the outer horizon
and throughout the intervening region.

The purpose of this work is to formulate a horizon-regular criterion
for spherical metric $f(R)$ gravity. Starting from the standard
cross-focusing equation, we isolate the exact Jordan-frame
matter--scalaron source and recast the dynamics, in an affine null
gauge, as an area-weighted evolution law for
$r^2\theta_{(\ell)}$. This yields a propagation obstruction: if
$\mathcal{P}_{\ell n}\leq F/r^2$ along a regular ingoing null segment
from a nondegenerate future outer marginal sphere, the outgoing
expansion cannot return to zero. Conversely, a regular future inner
marginal sphere requires the reverse inequality locally, so an
outer--inner pair demands both a source reversal and an exact integral
balance. We further relate this criterion to the corresponding static
radial identity and test it in General-Relativity limits and an exact
charged nonconstant-curvature $f(R)$ black hole with $F'\neq0$.

The result should be interpreted as an obstruction to a specified class
of regular inner marginal horizons, rather than as an unconditional
global no-Cauchy-horizon theorem. In Reissner--Nordstr\"om-type static
extensions, a regular inner Killing horizon may simultaneously be
marginal and constitute a Cauchy horizon, in which case the criterion
applies under the stated hypotheses. In a general dynamical spacetime,
however, a Cauchy horizon need not coincide with an inner marginal or
trapping horizon, and singular mass-inflation boundaries require a
separate analysis.

The paper is organized as follows. Section~\ref{sec:field-equations}
introduces the $f(R)$ field equations and double-null geometry.
Section~\ref{sec:focusing} develops the affine and cross-focusing
relations, and Sec.~\ref{sec:theorems} establishes the integral
criterion, propagation obstruction, and necessary source reversal.
Section~\ref{sec:static-limit} discusses the regular static limit,
while Sec.~\ref{sec:checks} presents General-Relativity checks,
constant-curvature reductions, and a scalaron-active charged example.
The scope of the Cauchy-horizon statement and its relation to mass
inflation are discussed in Sec.~\ref{sec:scope}. We conclude in
Sec.~\ref{sec:conclusions}, and collect the coordinate derivation in
Appendix~\ref{app:derivation}.

\section{\texorpdfstring{Metric $f(R)$ gravity in double-null form}
	{Metric f(R) gravity in double-null form}}
\label{sec:field-equations}

\subsection{Field equations and scalaron trace equation}

We consider metric $f(R)$ gravity with action
\begin{equation}
	S=\frac{1}{16\pi}\int d^4x\,\sqrt{-g}\,f(R)
	+S_m[g_{\mu\nu},\Psi] .
	\label{eq:action}
\end{equation}
We adopt the metric signature $(-,+,+,+)$ and the curvature
convention
\begin{equation}
	R^\rho{}_{\sigma\mu\nu}
	=\partial_\mu\Gamma^\rho{}_{\nu\sigma}
	-\partial_\nu\Gamma^\rho{}_{\mu\sigma}
	+\Gamma^\rho{}_{\mu\lambda}
	\Gamma^\lambda{}_{\nu\sigma}
	-\Gamma^\rho{}_{\nu\lambda}
	\Gamma^\lambda{}_{\mu\sigma},
\end{equation}
with $R_{\mu\nu}=R^\rho{}_{\mu\rho\nu}$.
Variation with respect to $g_{\mu\nu}$ gives
\begin{equation}
	F R_{\mu\nu}-\frac12 f g_{\mu\nu}
	-\nabla_\mu\nabla_\nu F+g_{\mu\nu}\Box F
	=8\pi T_{\mu\nu},
	\label{eq:field-eq}
\end{equation}
where
$F\equiv f_R=df/dR$, $\Box\equiv g^{\alpha\beta}\nabla_\alpha\nabla_\beta$,
and
\begin{equation}
	T_{\mu\nu}
	\equiv-\frac{2}{\sqrt{-g}}
	\frac{\delta S_m}{\delta g^{\mu\nu}}
\end{equation}
is the matter energy-momentum tensor.
Taking the trace of Eq.~\eqref{eq:field-eq} yields
\begin{equation}
	3\Box F+FR-2f=8\pi T,
	\label{eq:trace}
\end{equation}
with $T\equiv T^\mu{}_\mu$.

Throughout the theorem-bearing part of the paper, we restrict the
analysis to spacetime regions on which the areal radius satisfies
$r>0$. We assume that the metric and the scalaron $F$ are at least
twice continuously differentiable, that the matter projections
entering the focusing equations are continuous, and that all
quantities appearing in those equations remain finite on the null
segments under consideration. Boundary values are understood as
one-sided limits from within the regular region.

We further assume $F>0$. This condition ensures a positive effective gravitational coupling,
prevents the coefficient of the Ricci tensor in
Eq.~\eqref{eq:field-eq} from vanishing, and permits division by $F$
in the focusing inequalities derived below. It is a standard
viability requirement in metric $f(R)$ gravity, although it does
not by itself constitute a complete set of stability
conditions~\cite{Sotiriou:2008rp,DeFelice:2010aj}.
The original undivided field equations, including
Eq.~\eqref{eq:Einstein-form} (see below), may remain algebraically meaningful at
a zero of $F$. However, the divided Einstein-like equations, the
source ratio $\Pln/F$, and the theorems presented below do not apply
across such a point.

It is convenient to rewrite Eq.~\eqref{eq:field-eq} in Einstein form,
\begin{equation}
	F G_{\mu\nu}=8\pi T_{\mu\nu}
	+\nabla_\mu\nabla_\nu F-g_{\mu\nu}\Box F
	+\frac12(f-FR)g_{\mu\nu}.
	\label{eq:Einstein-form}
\end{equation}
Equivalently, on a region satisfying $F>0$,
one may divide this equation by $F$ and interpret the scalaron
derivative and potential terms as an effective gravitational
source. In the present work, however, these contributions will be
kept explicit rather than absorbed into an effective
energy-momentum tensor.

\subsection{Double-null geometry and null frame}

In spherical symmetry, we adopt double-null coordinates $(u,v)$ and
write the metric as
\begin{equation}
	ds^2=-2e^{-2\sigma(u,v)}\,du\,dv
	+r^2(u,v)\,d\Omega^2.
	\label{eq:double-null}
\end{equation}
We assume that both $u$ and $v$ increase toward the future and that
the conformal factor $e^{-2\sigma}$ is finite and strictly positive
throughout the regular region under consideration.

Along each curve $v=\mathrm{const.}$, we choose the future-directed
ingoing null vector $n^\mu$ and the future-directed outgoing null
vector $\ell^\mu$ as
\begin{equation}
	n^\mu=e^{2\sigma}(\partial_u)^\mu,
	\qquad
	\ell^\mu=(\partial_v)^\mu.
	\label{eq:null-frame}
\end{equation}
They satisfy
\begin{equation}
	n^\mu n_\mu=0,
	\qquad
	\ell^\mu\ell_\mu=0,
	\qquad
	\ell^\mu n_\mu=-1.
\end{equation}

The vector $n^\mu$ is tangent to the ingoing null generators and is
affinely parametrized. Thus, if $\lambda$ denotes the affine
parameter increasing toward the black-hole interior, then
\begin{equation}
	n^\mu=\left(\frac{d}{d\lambda}\right)^\mu,
	\qquad
	\nabla_n n^\mu=0,
	\qquad
	\nabla_n\ell^\mu=0.
\end{equation}
The last relation fixes $\ell^\mu$ by parallel transport along each
ingoing generator.

The normalization and transport conditions remain invariant under
\begin{equation}
	n^\mu\rightarrow c\,n^\mu,
	\qquad
	\ell^\mu\rightarrow c^{-1}\ell^\mu,
	\qquad
	c>0,
	\qquad
	n(c)=0.
	\label{eq:boost}
\end{equation}
Here $n(c)\equiv n^\mu\nabla_\mu c=dc/d\lambda$. Thus,
$n(c)=0$ requires the positive boost function $c$ to remain
constant along each affinely parametrized ingoing null generator,
although it may vary from one generator to another. Under this
transformation, the null expansions scale according to
\begin{equation}
	\theta_{(n)}\rightarrow c\,\theta_{(n)},
	\qquad
	\theta_{(\ell)}\rightarrow c^{-1}\theta_{(\ell)}.
\end{equation}
Here
$T_{\ell n}\equiv T_{\mu\nu}\ell^\mu n^\nu$ and
$\nabla_\ell\nabla_n F
\equiv
\ell^\mu n^\nu\nabla_\mu\nabla_\nu F$.
Because $n(c)=0$, the quantities
$\Lie_n\theta_{(\ell)}$,
$\theta_{(\ell)}\theta_{(n)}$,
$T_{\ell n}$, and
$\nabla_\ell\nabla_n F$
are invariant. Consequently, the cross-focusing equation, the
outer/inner classification, and the associated source threshold are
independent of this residual boost freedom.

The null expansions of the symmetry spheres are
\begin{equation}
	\theta_{(\ell)}
	=\frac{2}{r}\,\ell(r)
	=\frac{2r_{,v}}{r},
	\qquad
	\theta_{(n)}
	=\frac{2}{r}\,n(r)
	=\frac{2e^{2\sigma}r_{,u}}{r}.
\end{equation}
A future marginal sphere of the outgoing family is characterized by $\theta_{(\ell)}=0$ and $\theta_{(n)}<0$, whereas a future trapped sphere satisfies $\theta_{(\ell)}<0$ and $\theta_{(n)}<0$.

\section{Affine focusing and cross-focusing}
\label{sec:focusing}

\subsection{Affine null Raychaudhuri equation}

Let $k^\mu$ be tangent to a future-directed affinely parametrized
radial null congruence, with affine parameter $\lambda$:
\begin{equation}
	k^\mu=\left(\frac{d}{d\lambda}\right)^\mu,
	\qquad
	k^\nu\nabla_\nu k^\mu=0.
	\label{eq:affine-k}
\end{equation}
We denote the expansion of the congruence by $\theta_{(k)}$ and define
$T_{kk}\equiv T_{\mu\nu}k^\mu k^\nu$.
Spherical radial congruences are hypersurface orthogonal and shear
free. The Raychaudhuri equation and the twice-null projection of the
field equations therefore give
\begin{align}
	\frac{d\theta_{(k)}}{d\lambda}
	&=-\frac12\theta_{(k)}^2-R_{\mu\nu}k^\mu k^\nu,
	\label{eq:Raychaudhuri-affine}\\
	F R_{\mu\nu}k^\mu k^\nu
	&=8\pi T_{kk}+\frac{d^2F}{d\lambda^2},
	\label{eq:null-projection}
\end{align}
where the second equality uses affine parametrization. Hence
\begin{equation}
	\frac{d\theta_{(k)}}{d\lambda}
	=-\frac12\theta_{(k)}^2
	-\frac{1}{F}\left(8\pi T_{kk}
	+\frac{d^2F}{d\lambda^2}\right).
	\label{eq:modified-Raychaudhuri}
\end{equation}
Thus the matter null energy condition alone does not fix
same-direction focusing: the scalaron acceleration may reinforce or
oppose the matter term. This relation is a complementary diagnostic;
the existence of an inner marginal horizon is controlled by the
transverse evolution of $\theta_{(\ell)}$, to which we now turn.

\subsection{Exact cross-focusing equation}

For the frame~\eqref{eq:null-frame}, we define
\begin{equation}
	G_{\ell n}\equiv G_{\mu\nu}\ell^\mu n^\nu,
	\qquad
	\Lie_n\theta_{(\ell)}
	\equiv n^\mu\nabla_\mu\theta_{(\ell)}.
\end{equation}
The mixed Einstein tensor obeys the geometric identity
\begin{equation}
	G_{\ell n}
	=\Lie_n\theta_{(\ell)}
	+\theta_{(\ell)}\theta_{(n)}
	+\frac{1}{r^2},
	\label{eq:Gln-geometry}
\end{equation}
Equation~\eqref{eq:Gln-geometry} is the spherical specialization of
the standard cross-focusing relation for a normalized null pair
~\cite{Hayward:1993wb,Maciel:2024vqg}. The novelty below lies not in this
geometric identity itself, but in its explicit Jordan-frame $f(R)$
source decomposition and in the area-weighted propagation theorem.
A direct coordinate derivation in the conventions adopted here is
provided in Appendix~\ref{app:derivation}.

We define the mixed matter--scalaron source
\begin{equation}
	\Pln\equiv
	8\pi T_{\ell n}
	+\nabla_\ell\nabla_n F
	+\Box F
	+\frac12(FR-f).
	\label{eq:P-invariant}
\end{equation}
Here and throughout, $\nabla_\ell\nabla_n F$ denotes the tensorial
contraction of the Hessian defined above. It should not, in general, be
identified with the iterated directional derivative $\ell[n(F)]$,
since the latter also contains a derivative of $n^\mu$.

{In General Relativity, with
	$f(R)=R-2\Lambda$ and hence $F\equiv1$, the scalaron derivative
	terms in Eq.~\eqref{eq:P-invariant} vanish, while the algebraic
	curvature term reduces to the cosmological constant. Consequently,
	$\Pln=8\pi T_{\ell n}+\Lambda$ (Sec.~\ref{sec:checks}), so the mixed
	source is determined entirely by the physical matter content and
	$\Lambda$. In metric $f(R)$ gravity, by contrast, the mixed source
	contains additional scalaron-dependent contributions through the
	mixed Hessian $\nabla_\ell\nabla_n F$, the trace term $\Box F$, and
	the generally nontrivial algebraic curvature potential
	$\tfrac12(FR-f)$. These are distinct contributions, although they are
	not generically independent because they are related by the scalaron
	trace equation. They may reinforce or oppose the matter contribution
	and can participate in the source reversal required for a regular
	inner marginal horizon, as illustrated explicitly in
	Sec.~\ref{sec:Tang-benchmark}. This additional scalaron structure is
	the essential modification of the cross-focusing source when passing
	from General Relativity to metric $f(R)$ gravity.}

Contracting Eq.~\eqref{eq:Einstein-form} with
$\ell^\mu n^\nu$ and using $g_{\ell n}=-1$ gives
\begin{equation}
	F G_{\ell n}=\Pln,
	\label{eq:P-def-eq}
\end{equation}

Combining Eqs.~\eqref{eq:Gln-geometry}
and~\eqref{eq:P-def-eq}, we obtain the exact cross-focusing equation
\begin{equation}
	\Lie_n\theta_{(\ell)}
	=-\theta_{(\ell)}\theta_{(n)}
	-\frac{1}{r^2}
	+\frac{\Pln}{F}.
	\label{eq:cross-focusing}
\end{equation}
This relation holds on every regular region satisfying $F>0$. The undivided relation
$F G_{\ell n}=\Pln$ remains algebraically meaningful without
division by $F$, but Eq.~\eqref{eq:cross-focusing} and the
inequalities derived from it do not apply through a zero of the
scalaron.

A particularly useful consequence follows by multiplying
Eq.~\eqref{eq:cross-focusing} by $r^2$. Since
\begin{equation}
	\Lie_n(r^2)
	=2r\,n(r)
	=r^2\theta_{(n)},
	\label{eq:area-derivative}
\end{equation}
the product term involving the two expansions combines into a total
derivative:
\begin{align}
	\Lie_n\!\left(r^2\theta_{(\ell)}\right)
	&=r^2\Lie_n\theta_{(\ell)}
	+\theta_{(\ell)}\Lie_n(r^2)
	\notag\\
	&=-1+\frac{r^2\Pln}{F}.
	\label{eq:area-weighted-cross-focusing}
\end{align}
Equation~\eqref{eq:area-weighted-cross-focusing} is the
area-weighted cross-focusing identity. It isolates the geometric
threshold $\Pln=F/r^2$ directly and provides the natural
monotonicity law for the analysis of inner marginal horizons.

{The quantity $F/r^2$ combines the scalaron
	coupling function $F$, which multiplies the Ricci tensor in the
	metric $f(R)$ field equations, with $1/r^2$, the intrinsic Gauss
	curvature of the symmetry sphere. It therefore defines a
	scalaron-weighted local spherical-curvature scale, rather than an
	independent input to the theory. By contrast, $\Pln$, as defined in
	Eq.~\eqref{eq:P-invariant}, is the mixed Jordan-frame
	matter--scalaron source that drives the transverse focusing of the
	outgoing congruence. The criterion developed below is therefore a
	comparison between this mixed source and the corresponding
	scalaron-weighted curvature threshold: whether $\Pln$ remains below
	or rises above the value $F/r^2$ required for the area-weighted
	outgoing expansion to reverse its monotonic trend and recover toward
	zero.}

In the coordinates of Eq.~\eqref{eq:double-null}, the source can be
written as
\begin{align}
	\Pln={}&8\pi e^{2\sigma}T_{uv}
	-e^{2\sigma}F_{,uv}
	\notag\\
	&-\frac{2e^{2\sigma}}{r}
	\left(r_{,u}F_{,v}+r_{,v}F_{,u}\right)
	+\frac12(FR-f).
	\label{eq:P-coordinate}
\end{align}
This expression follows from
\begin{equation}
	\nabla_\ell\nabla_n F=e^{2\sigma}F_{,uv}
	\label{eq:mixed-Hessian-coordinate}
\end{equation}
and
\begin{equation}
	\Box F=-2e^{2\sigma}
	\left[
	F_{,uv}
	+\frac{r_{,u}F_{,v}+r_{,v}F_{,u}}{r}
	\right].
	\label{eq:boxF-coordinate}
\end{equation}

Alternatively, the trace equation~\eqref{eq:trace} yields
\begin{equation}
	\Pln=8\pi T_{\ell n}
	+\nabla_\ell\nabla_n F
	+\frac{8\pi}{3}T
	+\frac16(FR+f).
	\label{eq:P-trace}
\end{equation}
Equations~\eqref{eq:P-invariant}, \eqref{eq:P-coordinate},
and~\eqref{eq:P-trace} are equivalent. Different forms may be more
useful depending on whether the metric, the scalaron profile, or the
matter trace is known explicitly.

To display the physical content of the mixed matter projection,
introduce an orthonormal radial frame $(u^\mu,s^\mu)$ adapted to
the normalized null pair,
\begin{equation}
	\ell^\mu=\frac{1}{\sqrt{2}}
	\left(u^\mu+s^\mu\right),
	\qquad
	n^\mu=\frac{1}{\sqrt{2}}
	\left(u^\mu-s^\mu\right),
	\label{eq:null-orthonormal-relation}
\end{equation}
where $u^\mu u_\mu=-1$, $s^\mu s_\mu=1$, and
$u^\mu s_\mu=0$. For an anisotropic matter source with energy
density $\rho$, radial pressure $p_r$, and no radial energy flux,
\begin{equation}
	T_{\ell n}=\frac12(\rho-p_r).
	\label{eq:matter-mixed}
\end{equation}
The two principal same-direction projections are
\begin{equation}
	T_{\ell\ell}=T_{nn}
	=\frac12(\rho+p_r).
	\label{eq:matter-same-direction}
\end{equation}
This distinction is important. A radial Maxwell field satisfies
$p_r=-\rho$, and therefore
\begin{equation}
	T_{\ell\ell}=T_{nn}=0,
	\qquad
	T_{\ell n}=\rho>0.
\end{equation}
It can thus saturate the same-direction null projection while
providing the positive mixed stress required to reverse the
cross-focusing balance at an inner horizon.

\subsection{Outer, inner, and degenerate marginal spheres}

At a regular future marginal sphere $H$ of the outgoing family, we
assume
\begin{equation}
	\theta_{(\ell)}\big|_H=0,
	\qquad
	\theta_{(n)}\big|_H<0,
	\qquad
	r_H>0,
	\qquad
	F_H>0,
\end{equation}
with all quantities entering the cross-focusing equation finite.
Equation~\eqref{eq:cross-focusing} then reduces to
\begin{equation}
	\Ccross_H
	\equiv
	\left.\Lie_n\theta_{(\ell)}\right|_H
	=-\frac{1}{r_H^2}
	+\frac{\left.\Pln\right|_H}{F_H}.
	\label{eq:C-horizon}
\end{equation}

Following the standard trapping-horizon classification
of Hayward~\cite{Hayward:1993wb}, a future marginal sphere is classified
according to
\begin{align}
	\text{future outer:}\quad
	&\Ccross_H<0,
	\label{eq:outer-def}\\
	\text{future inner:}\quad
	&\Ccross_H>0,
	\label{eq:inner-def}\\
	\text{degenerate:}\quad
	&\Ccross_H=0.
	\label{eq:degenerate-def}
\end{align}
Strictly speaking, these conditions classify an individual marginal
sphere. A hypersurface foliated by such spheres is called a future
outer, future inner, or degenerate trapping horizon when the
corresponding sign is maintained along the foliation. A marginal
sphere is nondegenerate whenever $\Ccross_H\neq0$.

The quantity $\Ccross_H$ is invariant under the residual boost
freedom
\begin{equation}
	n^\mu\rightarrow c\,n^\mu,
	\qquad
	\ell^\mu\rightarrow c^{-1}\ell^\mu,
	\qquad
	n(c)=0,
	\qquad
	c>0,
\end{equation}
introduced in Eq.~\eqref{eq:boost}. Indeed,
$\theta_{(n)}\rightarrow c\,\theta_{(n)}$ and
$\theta_{(\ell)}\rightarrow c^{-1}\theta_{(\ell)}$, while
$\Lie_n\theta_{(\ell)}$ remains invariant. The classification is therefore independent of the admissible
normalization of the null frame.

Equation~\eqref{eq:C-horizon} immediately gives
\begin{align}
	\left.\Pln\right|_H
	&<\frac{F_H}{r_H^2}
	&&\text{for a future outer marginal sphere},
	\label{eq:outer-threshold}\\
	\left.\Pln\right|_H
	&>\frac{F_H}{r_H^2}
	&&\text{for a future inner marginal sphere},
	\label{eq:inner-threshold}\\
	\left.\Pln\right|_H
	&=\frac{F_H}{r_H^2}
	&&\text{for a degenerate marginal sphere}.
\end{align}

Thus, a regular nondegenerate future inner marginal sphere requires
the mixed matter--scalaron source to exceed the scalaron-weighted
spherical-curvature threshold $F/r^2$. The inequalities above are
the $f(R)$ field-equation representation of the local Hayward
outer/inner classification. By themselves, they do not constitute a
global obstruction. The nontrivial propagation result derived below
is that maintaining the opposite inequality throughout the
intervening region prevents the outgoing expansion from returning to
zero.

\section{Obstructions to a regular inner marginal horizon}
\label{sec:theorems}

We now formulate the main results. Let $\gamma$ be a future-directed,
affinely parametrized ingoing radial null geodesic with tangent
$n^\mu$, and let $\lambda$ denote its affine parameter, increasing
from an outer marginal sphere toward the black-hole interior. We
assume throughout that $r>0$, $F>0$, and that all quantities
entering the cross-focusing equation remain regular on the segment
under consideration. If an endpoint is reached only as a limiting
null boundary, all endpoint values and integrals below are understood
in the corresponding one-sided sense; whenever an integral
is improper, its convergence is included among the hypotheses.

\begin{proposition}[Exact area-weighted cross-focusing integral]
	\label{prop:integral}
	Let $\lambda_a<\lambda_b$ be two points on $\gamma$. Then
	Eq.~\eqref{eq:area-weighted-cross-focusing} implies
	\begin{equation}
		\begin{split}
			&r^2(\lambda_b)\theta_{(\ell)}(\lambda_b)
			-r^2(\lambda_a)\theta_{(\ell)}(\lambda_a)
			\\
			&\qquad =
			\int_{\lambda_a}^{\lambda_b}
			\left(
			\frac{r^2\Pln}{F}-1
			\right)d\lambda .
		\end{split}
		\label{eq:area-weighted-integral}
	\end{equation}
	In particular, if the two endpoints are marginal spheres of the
	outgoing family,
	\begin{equation}
		\theta_{(\ell)}(\lambda_o)
		=\theta_{(\ell)}(\lambda_i)=0,
	\end{equation}
	then
	\begin{equation}
		\int_{\lambda_o}^{\lambda_i}
		\left(
		\frac{r^2\Pln}{F}-1
		\right)d\lambda=0.
		\label{eq:integral-criterion}
	\end{equation}
\end{proposition}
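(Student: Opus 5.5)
The plan is to integrate the area-weighted identity \eqref{eq:area-weighted-cross-focusing} along $\gamma$ with the fundamental theorem of calculus. First I will rewrite the Lie derivative along $n^\mu$ as an ordinary derivative in the affine parameter. Since $n^\mu=(d/d\lambda)^\mu$ on $\gamma$, for any scalar $\psi$ we have $\Lie_n\psi=n^\mu\nabla_\mu\psi=d\psi/d\lambda$. Evaluated along $\gamma$, Eq.~\eqref{eq:area-weighted-cross-focusing} therefore becomes the ordinary differential identity
\begin{equation*}
	\frac{d}{d\lambda}\left(r^2\theta_{(\ell)}\right)=\frac{r^2\Pln}{F}-1 .
\end{equation*}
This identity holds on $[\lambda_a,\lambda_b]$, where $r>0$, $F>0$ and the regularity hypotheses of Sec.~\ref{sec:theorems} hold.

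Next I will check that the fundamental theorem of calculus applies. The metric and $F$ are $C^2$ and the matter projections are continuous, so $r$, $\theta_{(\ell)}=2r_{,v}/r$ and $r^2\theta_{(\ell)}=2rr_{,v}$ are $C^1$ along $\gamma$. The right-hand side is continuous: $\Pln$ involves at most second derivatives of $F$ through Eqs.~\eqref{eq:mixed-Hessian-coordinate}--\eqref{eq:boxF-coordinate}, together with $T_{\ell n}$ and the algebraic term $\tfrac12(FR-f)$, and we divide by $F>0$. Integrating from $\lambda_a$ to $\lambda_b$ then gives Eq.~\eqref{eq:area-weighted-integral} directly.

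The one point that needs care is an endpoint reached only as a one-sided limit, for instance a limiting null boundary. In that case I will integrate over $[\lambda_a+\epsilon,\lambda_b-\epsilon]$ and let $\epsilon\to0$. The left-hand side converges to the one-sided endpoint values by assumption. The right-hand side converges as an improper integral, and this convergence is part of the stated hypotheses. So equality survives the limit, and this step is bookkeeping rather than a genuine obstacle.

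For the particular case, I set $\lambda_a=\lambda_o$ and $\lambda_b=\lambda_i$ with $\theta_{(\ell)}(\lambda_o)=\theta_{(\ell)}(\lambda_i)=0$. Because $r$ is finite at both endpoints, both boundary terms $r^2\theta_{(\ell)}$ vanish, and Eq.~\eqref{eq:integral-criterion} follows. Finally, the statement is invariant under the residual boost \eqref{eq:boost}, so the argument does not depend on the normalization of the null frame. Under that boost $\lambda\to\lambda/c$ with $c$ constant along $\gamma$, while $\Pln$ and $\theta_{(\ell)}$ each scale by $c^{-1}$. Both sides of Eq.~\eqref{eq:area-weighted-integral} are therefore multiplied by $c^{-1}$, so the equality is preserved.
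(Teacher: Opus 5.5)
Your proof is correct and is the paper's proof: along the affine generator $\Lie_n=d/d\lambda$, so integrating \eqref{eq:area-weighted-cross-focusing} gives \eqref{eq:area-weighted-integral}, and the boundary terms $r^2\theta_{(\ell)}$ vanish at marginal endpoints; your regularity and one-sided-limit bookkeeping only makes explicit what the paper leaves implicit. Your closing boost remark is not needed for the proof, and it contains a slip: $\Pln$ is boost-invariant, not rescaled by $c^{-1}$ (because $T_{\ell n}$ and $\nabla_\ell\nabla_n F$ are invariant and $\Box F$, $FR-f$ are scalars), so it is $d\lambda\to c^{-1}d\lambda$ together with $\theta_{(\ell)}\to c^{-1}\theta_{(\ell)}$ that multiplies both sides by $c^{-1}$ — with your claimed $\Pln\to c^{-1}\Pln$ the integrand $r^2\Pln/F-1$ would not even transform homogeneously.
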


\begin{proof}
	Integrating
	Eq.~\eqref{eq:area-weighted-cross-focusing} along the affine
	generator gives Eq.~\eqref{eq:area-weighted-integral}. If both
	endpoints are marginal, the two boundary terms vanish, yielding
	Eq.~\eqref{eq:integral-criterion}.
\end{proof}

Unlike an integral derived in a Schwarzschild-like static chart,
Eq.~\eqref{eq:integral-criterion} contains no singular horizon weight.
It is formulated entirely in terms of a null frame that remains
regular across the outer marginal surface and through the nonstatic
interior. Moreover, the product
$\theta_{(\ell)}\theta_{(n)}$ has been absorbed exactly into the
area-weighted derivative.

\begin{theorem}[Sufficient obstruction to a second regular future marginal sphere]
	\label{thm:no-inner}
	Let $\lambda_o$ correspond to a regular, nondegenerate future
	outer marginal sphere of the outgoing family, so that
	\begin{equation}
		\theta_{(\ell)}(\lambda_o)=0,
		\qquad
		\theta_{(n)}(\lambda_o)<0,
		\qquad
		\left.\Lie_n\theta_{(\ell)}\right|_{\lambda_o}<0.
		\label{eq:outer-initial-data}
	\end{equation}
	Suppose that, along the subsequent regular segment of $\gamma$,
	\begin{equation}
		\Pln\leq\frac{F}{r^2}.
		\label{eq:source-bound}
	\end{equation}
	Then $r^2\theta_{(\ell)}$ is nonincreasing along the segment and
	is strictly negative for every $\lambda>\lambda_o$. Consequently,
	$\theta_{(\ell)}$ cannot return to zero, and no later regular
	future marginal sphere of the outgoing family can occur on the same
	generator. In particular, the segment cannot terminate at a regular
	future inner marginal sphere.
\end{theorem}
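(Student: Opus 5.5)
The plan is to work entirely with the area-weighted identity \eqref{eq:area-weighted-cross-focusing}, or equivalently its integrated form in Proposition~\ref{prop:integral}. Set $\Phi(\lambda)\equiv r^2\theta_{(\ell)}$ along $\gamma$. Since $r$, $\theta_{(\ell)}$ and their derivatives are continuous on the regular segment, $\Phi$ is $C^1$ there, with
\begin{equation*}
	\frac{d\Phi}{d\lambda}=\frac{r^2}{F}\left(\Pln-\frac{F}{r^2}\right).
\end{equation*}
Because $r^2/F>0$ (using $r>0$ and $F>0$), the hypothesis \eqref{eq:source-bound} gives $d\Phi/d\lambda\leq0$ on the whole segment. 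Monotonicity of $\Phi$ then follows at once, either from the mean value theorem or directly from \eqref{eq:area-weighted-integral} with a nonpositive integrand.

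Monotonicity only yields $\Phi(\lambda)\leq\Phi(\lambda_o)=0$, whereas the theorem claims strict negativity. This is the one step needing care, and it is where nondegeneracy enters. At $\lambda_o$ we have $\theta_{(\ell)}=0$, so
\begin{equation*}
	\Phi'(\lambda_o)=r_o^2\,\Lie_n\theta_{(\ell)}\big|_{\lambda_o}+\theta_{(\ell)}(\lambda_o)\,r_o^2\theta_{(n)}(\lambda_o)=r_o^2\,\Ccross_H<0.
\end{equation*}
By continuity of $\Phi'$ there is some $\epsilon>0$ with $\Phi'<0$ on $[\lambda_o,\lambda_o+\epsilon)$. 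Hence $\Phi(\lambda)<0$ for every $\lambda\in(\lambda_o,\lambda_o+\epsilon)$.

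For an arbitrary $\lambda>\lambda_o$ on the segment, choose $\lambda_1\in(\lambda_o,\min(\lambda,\lambda_o+\epsilon))$. Nonincrease of $\Phi$ on $[\lambda_1,\lambda]$ then gives $\Phi(\lambda)\leq\Phi(\lambda_1)<0$. If the far endpoint is only a one-sided limit, the same inequality persists in the limit, with $\lim\Phi\leq\Phi(\lambda_1)<0$, provided the improper integral converges as assumed. So strict negativity survives even at a limiting boundary.

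The consequences then follow directly. Since $r>0$, we get $\theta_{(\ell)}=\Phi/r^2<0$ for all $\lambda>\lambda_o$. Thus $\theta_{(\ell)}$ never vanishes again, and no later future marginal sphere of the outgoing family can occur on $\gamma$. In particular the segment cannot terminate at a regular future inner marginal sphere, since the latter requires $\theta_{(\ell)}=0$ at the endpoint. An equivalent way to see this last point is through \eqref{eq:integral-criterion}: a second marginal point would force the integral of a nonpositive integrand to vanish. That would make the integrand zero almost everywhere, and by continuity identically zero, contradicting $\Phi'(\lambda_o)<0$.
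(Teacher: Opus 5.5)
Your proposal is correct and follows essentially the same route as the paper. Since $\theta_{(\ell)}(\lambda_o)=0$, you get $\Lie_n(r^2\theta_{(\ell)})|_{\lambda_o}=r_o^2\,\Ccross_H<0$, so $r^2\theta_{(\ell)}$ is strictly negative just inside the outer sphere; the bound $\Pln\leq F/r^2$ in the area-weighted identity then makes it nonincreasing, so it can never climb back to zero. Your explicit intermediate point $\lambda_1$ and the remark on one-sided endpoints simply make precise the paper's step that $r^2\theta_{(\ell)}$ ``cannot increase from its negative value back to zero.''
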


\begin{proof}
	At the initial future outer marginal sphere,
	$\theta_{(\ell)}=0$, and therefore
	\begin{equation}
		\left.
		\Lie_n\!\left(r^2\theta_{(\ell)}\right)
		\right|_{\lambda_o}
		=
		r_o^2
		\left.\Lie_n\theta_{(\ell)}\right|_{\lambda_o}
		<0.
	\end{equation}
	By continuity, $r^2\theta_{(\ell)}$ becomes strictly negative
	immediately to the future of $\lambda_o$. Furthermore,
	Eq.~\eqref{eq:source-bound} and the area-weighted identity give
	\begin{equation}
		\Lie_n\!\left(r^2\theta_{(\ell)}\right)
		=-1+\frac{r^2\Pln}{F}
		\leq0.
	\end{equation}
	Hence $r^2\theta_{(\ell)}$ cannot increase from its negative
	value back to zero. Since $r>0$, it follows that
	$\theta_{(\ell)}<0$ at every later regular point of the segment.
\end{proof}

Theorem~\ref{thm:no-inner} is stronger than a proof based directly on
the sign of
$-\theta_{(\ell)}\theta_{(n)}$: the exclusion of a second marginal
sphere does not require $\theta_{(n)}<0$ at every intermediate
point. In the physically relevant case of a future trapped region,
\begin{equation}
	\theta_{(\ell)}<0,
	\qquad
	\theta_{(n)}<0,
	\label{eq:trapped-assumption}
\end{equation}
the original cross-focusing equation additionally gives
\begin{equation}
	\Lie_n\theta_{(\ell)}
	\leq-\theta_{(\ell)}\theta_{(n)}<0,
\end{equation}
so that $\theta_{(\ell)}$ itself is then strictly decreasing.

The condition~\eqref{eq:source-bound} is sufficient rather than
necessary. Its value is that it separates the universal spherical
term from the mixed matter--scalaron source and can be tested locally
along a null evolution without first determining the complete global
causal structure.

The resulting monotonic obstruction, and its contrast with a possible source-driven return of the outgoing expansion to zero, are illustrated schematically in Fig.~\ref{fig:obstruction-schematic}.

\begin{corollary}
	[Conditional obstruction to a regular inner Killing/Cauchy horizon]
	\label{cor:Cauchy}
	Consider a static, spherically symmetric black hole admitting a
	regular double-null extension from a nondegenerate future outer
	Killing horizon through the inter-horizon region. Suppose that a
	candidate inner boundary is a regular, nondegenerate future inner
	Killing horizon, is foliated by marginal symmetry spheres of the
	outgoing family, and constitutes a Cauchy horizon in the maximal
	development under consideration. If $F>0$, the required regularity
	conditions, and Eq.~\eqref{eq:source-bound} hold along the ingoing
	null generators connecting the two horizons, then no such regular
	inner Killing/Cauchy horizon exists.
\end{corollary}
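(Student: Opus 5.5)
The plan is to reduce the corollary directly to Theorem~\ref{thm:no-inner} by contradiction. Suppose such a regular inner Killing/Cauchy horizon $\mathcal{H}_-$ exists. Its Cauchy-horizon property is used only to identify it as the candidate boundary; the argument itself relies on its marginal and inner character. By hypothesis, $\mathcal{H}_-$ is foliated by marginal spheres of the outgoing family. Each such sphere therefore satisfies $\theta_{(\ell)}=0$ and $\theta_{(n)}<0$, together with $\Ccross_H>0$ from Eq.~\eqref{eq:inner-def}, because the horizon is nondegenerate and future inner.

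The first step is to fix a single ingoing radial null generator $\gamma$ of the double-null extension. I will choose it to leave a marginal sphere $S_o$ of the future outer Killing horizon $\mathcal{H}_+$, parametrize it affinely by $\lambda$ with tangent $n^\mu$, and let it reach $\mathcal{H}_-$ at some $\lambda_i>\lambda_o$. In the static double-null extension, the ingoing generators $v=\text{const}$ cross the inter-horizon region from $\mathcal{H}_+$ to $\mathcal{H}_-$. Because the metric and the conformal factor $e^{-2\sigma}$ are regular there, the affine parameter stays finite up to $\mathcal{H}_-$. At $S_o$, nondegeneracy of the future outer horizon gives exactly the initial data~\eqref{eq:outer-initial-data}: $\theta_{(\ell)}=0$, $\theta_{(n)}<0$, and $\Lie_n\theta_{(\ell)}<0$.

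The second step checks the hypotheses of Theorem~\ref{thm:no-inner} on $[\lambda_o,\lambda_i]$. Here $r>0$ and $F>0$ hold, the regularity conditions hold, and Eq.~\eqref{eq:source-bound} holds; all of these are assumed in the corollary. Quantities at $\lambda_i$ are understood as one-sided limits, and these are finite because $\mathcal{H}_-$ is regular. The theorem then gives $r^2\theta_{(\ell)}<0$ for all $\lambda\in(\lambda_o,\lambda_i]$. Since $r>0$, it follows that $\theta_{(\ell)}(\lambda_i)<0$. This contradicts the assumption that the sphere of $\mathcal{H}_-$ met by $\gamma$ is marginal.

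As a consistency check, I can also use Proposition~\ref{prop:integral}. The integral~\eqref{eq:integral-criterion} would have to vanish while its integrand is nonpositive, which forces $r^2\Pln/F\equiv 1$ on the segment. Then $\Lie_n(r^2\theta_{(\ell)})\equiv 0$, so $\Lie_n\theta_{(\ell)}=0$ at $\lambda_o$, which contradicts strict outerness. It likewise contradicts the strict inner inequality~\eqref{eq:inner-threshold} at $\lambda_i$.

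The main obstacle is the geometric step of identifying the endpoint. I must justify that the ingoing generator from $\mathcal{H}_+$ actually reaches a sphere of $\mathcal{H}_-$ at finite affine parameter, and that the outgoing frame $\ell^\mu$, parallel-transported along $\gamma$, agrees up to a positive constant boost with the frame used to call $\mathcal{H}_-$ ``inner''. The boost-invariance of $\Ccross_H$ and of the sign of $\theta_{(\ell)}$ under Eq.~\eqref{eq:boost} handles the second point. For the first point, I would rely on the assumed regular double-null extension, in which $e^{-2\sigma}$ is bounded and positive up to $\mathcal{H}_-$. I would not attempt any statement about singular (mass-inflation) boundaries, which lie outside the hypotheses.
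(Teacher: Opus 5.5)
Your proposal is correct and follows essentially the same route as the paper: a regular ingoing generator from the nondegenerate future outer marginal sphere would reach the candidate inner boundary as a second regular marginal sphere, and Theorem~\ref{thm:no-inner} excludes this, with the Cauchy-horizon property serving only to identify the boundary. Your remarks on reaching the inner horizon at finite affine parameter and on matching frames up to a constant boost spell out what the paper compresses into ``under the stated assumptions''; the integral check via Proposition~\ref{prop:integral} is redundant but valid.
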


\begin{proof}
	Under the stated assumptions, each relevant generator begins at a
	nondegenerate future outer marginal sphere and reaches the candidate
	inner boundary as a regular future inner marginal sphere.
	Theorem~\ref{thm:no-inner} excludes such a second marginal boundary.
	The additional statement that this boundary is a Cauchy horizon is
	a global assumption and is not inferred from marginality alone.
\end{proof}

The corollary is intentionally conditional. It does not identify every
inner Killing horizon with a Cauchy horizon, nor does it imply that
every Cauchy horizon in a dynamical spacetime must be foliated by
marginal spheres.

\begin{proposition}[Necessary source reversal]
	\label{prop:reversal}
	Under the regularity and positivity assumptions above, any
	nondegenerate future inner marginal horizon must satisfy
	\begin{equation}
		\begin{split}
			8\pi T_{\ell n}
			+\nabla_\ell\nabla_n F
			+\Box F
			+\frac12(FR-f)
			>\frac{F}{r^2}
		\end{split}
		\label{eq:necessary-inner}
	\end{equation}
	at the horizon. Equivalently,
	\begin{equation}
		\begin{split}
			8\pi T_{\ell n}
			+\nabla_\ell\nabla_n F
			+\frac{8\pi}{3}T
			+\frac16(FR+f)
			>\frac{F}{r^2}.
		\end{split}
		\label{eq:necessary-inner-trace}
	\end{equation}
	If such an inner marginal horizon is connected to a nondegenerate
	future outer marginal sphere by a regular ingoing generator, then
	the function
	\begin{equation}
		\mathscr{S}(\lambda)
		\equiv\frac{r^2\Pln}{F}-1,
		\label{eq:source-reversal-function}
	\end{equation}
{which we will refer to as the source-reversal
	function, since its sign along $\gamma$ directly determines whether
	the area-weighted outgoing expansion is driven monotonically away
	from, or can evolve back toward, a second marginal sphere,}
	is negative at the outer marginal sphere and positive at the inner
	marginal sphere. If $\mathscr{S}$ is continuous, it must therefore
	vanish at least once between them.
\end{proposition}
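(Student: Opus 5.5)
The plan is to read everything off the horizon form of the cross-focusing equation, Eq.~\eqref{eq:C-horizon}, together with the definition of $\Pln$ in Eq.~\eqref{eq:P-invariant}.

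\emph{First inequality.} At a regular future marginal sphere we have $\theta_{(\ell)}=0$ and $\theta_{(n)}<0$. The product term $\theta_{(\ell)}\theta_{(n)}$ in Eq.~\eqref{eq:cross-focusing} therefore drops out, and we obtain $\Ccross_H=-1/r_H^2+\Pln|_H/F_H$. Nondegeneracy in the inner sense means $\Ccross_H>0$, by Eq.~\eqref{eq:inner-def}. Since $F_H>0$, multiplying by $F_H$ preserves the direction of the inequality and gives $\Pln|_H>F_H/r_H^2$. This is Eq.~\eqref{eq:inner-threshold}. Substituting the explicit definition~\eqref{eq:P-invariant} of $\Pln$ then yields Eq.~\eqref{eq:necessary-inner}.

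\emph{Trace form.} To get the equivalent form, I eliminate $\Box F$ using the trace equation~\eqref{eq:trace}, which gives $\Box F=\frac13(8\pi T-FR+2f)$. Adding $\frac12(FR-f)$ to this produces $\frac{8\pi}{3}T+\frac16(FR+f)$. This reproduces Eq.~\eqref{eq:P-trace}, and hence Eq.~\eqref{eq:necessary-inner-trace}. The two inequalities are equivalent because the substitution is an identity wherever the field equations hold.

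\emph{Sign change and zero of $\mathscr{S}$.} Along the connecting generator, multiply Eq.~\eqref{eq:C-horizon} by $r^2>0$ at each endpoint. This gives $\mathscr{S}=r^2\Ccross$ at any marginal sphere.
\begin{itemize}
\item At the outer sphere, $\Ccross<0$ by Eq.~\eqref{eq:outer-def}, so $\mathscr{S}(\lambda_o)<0$.
\item At the inner sphere, $\Ccross>0$, so $\mathscr{S}(\lambda_i)>0$.
\end{itemize}
Continuity of $\mathscr{S}$ follows from the standing regularity assumptions: $r$, $F$, and the matter and scalaron terms are continuous, and $F>0$ and $r>0$ along the segment. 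If an endpoint is only a limiting boundary, I use the one-sided limits, as stipulated before Proposition~\ref{prop:integral}. The intermediate value theorem then gives a zero of $\mathscr{S}$ strictly between $\lambda_o$ and $\lambda_i$.

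As a consistency remark, this is also compatible with Proposition~\ref{prop:integral}: the integral of $\mathscr{S}$ vanishes, so $\mathscr{S}$ cannot keep one sign, which gives a second route to the same zero.

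The only delicate point I expect is bookkeeping. One must confirm that $\Ccross_H$ is evaluated with the same affinely normalized $n^\mu$ along the generator, which is harmless by the boost invariance already noted. One must also confirm that the endpoint values used in the intermediate value argument are the one-sided limits. Everything else is algebra.
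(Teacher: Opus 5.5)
Your proposal is correct and follows essentially the same route as the paper. Both arguments evaluate the horizon cross-focusing relation $\Ccross_H=-1/r_H^2+\Pln|_H/F_H$ with $\theta_{(\ell)}=0$ and use $F_H>0$ to get $\Pln|_H>F_H/r_H^2$, substituting either form of $\Pln$; both then apply the intermediate value theorem to $\mathscr{S}=r^2\Ccross$, which is negative at the outer sphere and positive at the inner one. Your explicit check of the trace-equation algebra and of the one-sided endpoint limits adds only detail the paper leaves implicit.
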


\begin{proof}
	At a future inner marginal sphere,
	$\Ccross_H>0$. Equation~\eqref{eq:C-horizon} therefore gives
	\begin{equation}
		\left.\Pln\right|_H>\frac{F_H}{r_H^2}.
	\end{equation}
	Substitution of either Eq.~\eqref{eq:P-invariant} or
	Eq.~\eqref{eq:P-trace} yields
	Eqs.~\eqref{eq:necessary-inner}
	and~\eqref{eq:necessary-inner-trace}. At a nondegenerate future
	outer marginal sphere, the corresponding inequality is strict in
	the opposite direction. Continuity then implies that
	$\mathscr{S}$ crosses zero somewhere between the two marginal
	surfaces.
\end{proof}

Moreover, if an outer and an inner marginal sphere lie on the same
regular generator, Proposition~\ref{prop:integral} imposes the exact
balance
\begin{equation}
	\int_{\lambda_o}^{\lambda_i}
	\mathscr{S}(\lambda)\,d\lambda=0.
	\label{eq:source-balance}
\end{equation}
Thus, positive contributions above the threshold must compensate
exactly for the region in which the source lies below it.

Figure~\ref{fig:obstruction-schematic} summarizes the two complementary
outcomes. Panel~(a) shows that when
$\mathscr{S}(\lambda)\leq0$, equivalently
$\Pln\leq F/r^2$, the area-weighted expansion
$r^2\theta_{(\ell)}$ cannot recover from its negative value and
return to zero. By contrast, a possible second marginal sphere
requires the source to reverse the monotonic trend. Panel~(b) displays
the corresponding necessary behaviour of $\mathscr{S}$: for a
regular outer--inner pair it must change sign, while its negative and
positive contributions must satisfy the exact balance
\eqref{eq:source-balance}.

\begin{figure*}[th!]
	\centering
	\includegraphics[width=0.85\textwidth]
	{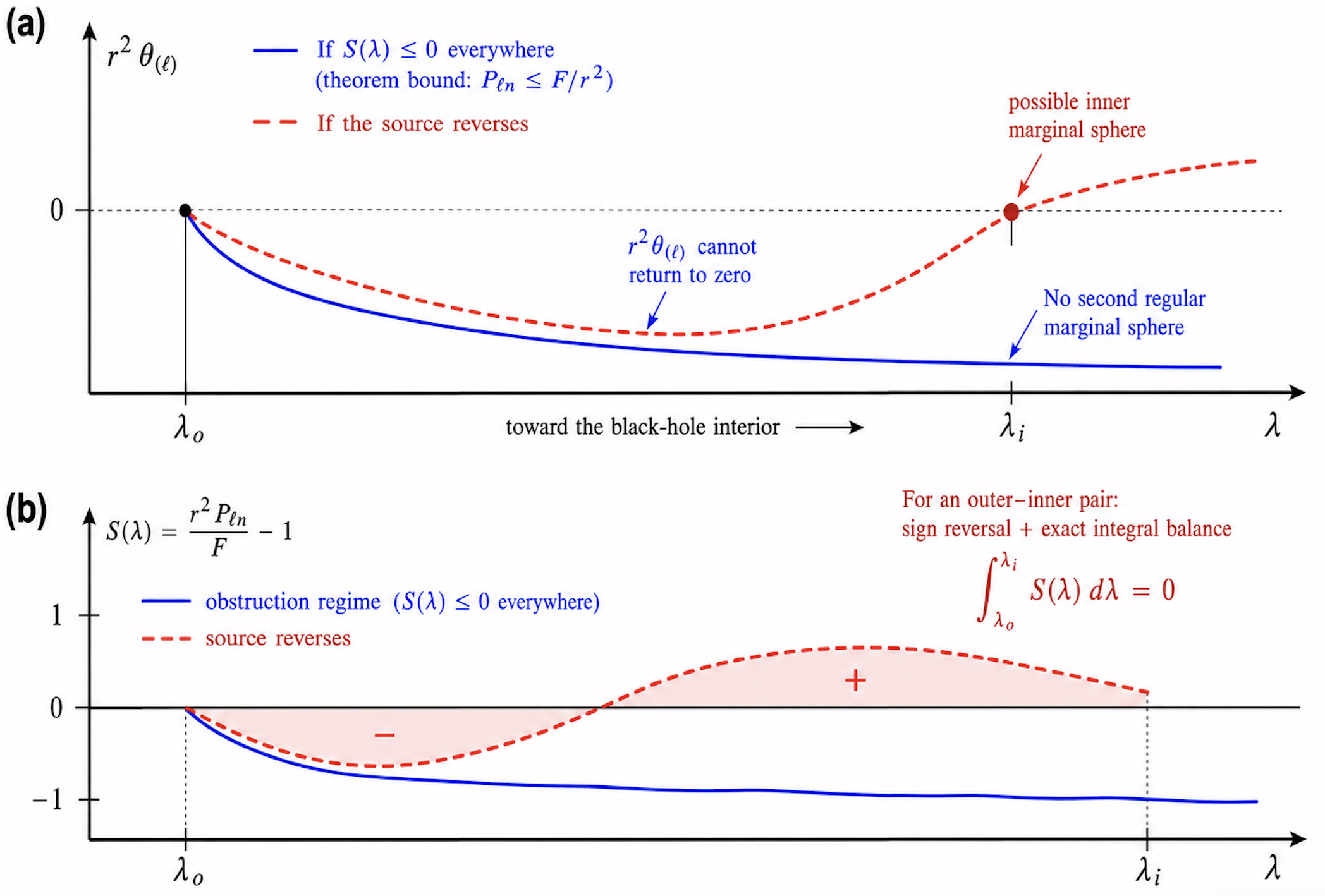}
	\caption{
		Schematic illustration of the obstruction theorem and the necessary
		source reversal for a regular outer--inner pair.
		Panel~(a) shows the evolution of the area-weighted outgoing
		expansion $r^2\theta_{(\ell)}$ along a future-directed ingoing
		null generator. Starting from the nondegenerate future outer
		marginal sphere at $\lambda_o$, the bound
		$\Pln\leq F/r^2$, or equivalently
		$\mathscr{S}\leq0$, makes $r^2\theta_{(\ell)}$
		nonincreasing once it becomes negative, thereby preventing its
		return to zero and excluding a second regular marginal sphere on
		the same generator. If the mixed source instead reverses the
		evolution sufficiently, a return to zero at a possible inner
		marginal sphere $\lambda_i$ becomes possible.
		Panel~(b) illustrates the corresponding source diagnostic
		$\mathscr{S}=r^2\Pln/F-1$. A regular outer--inner pair requires
		$\mathscr{S}<0$ at the outer marginal sphere and
		$\mathscr{S}>0$ at the inner marginal sphere, together with the
		exact integral balance
		$\int_{\lambda_o}^{\lambda_i}\mathscr{S}\,d\lambda=0$.
		The curves are schematic and are intended to illustrate the sign
		and monotonicity structure rather than a particular solution.}
	\label{fig:obstruction-schematic}
\end{figure*}

The inequality in Proposition~\ref{prop:reversal} may be supported by
a sufficiently positive mixed matter projection $T_{\ell n}$, by
the combined scalaron derivative terms, by the algebraic curvature
potential, or by a combination of these contributions. These channels
are not generally independent, since the scalaron and matter terms
are coupled through the trace equation.

A zero or sign change of $F$ is conceptually different: it does not
provide another regular contribution satisfying the inequality, but
instead invalidates the positive-coupling hypothesis and the divided
cross-focusing equation used in the theorem. Further possibilities
outside the theorem are that the candidate Cauchy boundary is
singular, is not marginal, or is not reached within a regular
double-null extension.

\begin{remark}[Degenerate case]
	If
	\begin{equation}
		\Pln=\frac{F}{r^2}
	\end{equation}
	at a marginal sphere, then
	$\Ccross_H=0$, and the sphere is degenerate rather than strictly
	outer or inner. Such an isolated equality does not alter
	Theorem~\ref{thm:no-inner}. Once
	$r^2\theta_{(\ell)}$ has become negative immediately inside a
	nondegenerate future outer marginal sphere, the bound
	$\Pln\leq F/r^2$ prevents it from increasing back to zero, even
	if the bound is saturated at later points.
\end{remark}

\section{Static limit and relation to radial criteria}
\label{sec:static-limit}

Consider a static, spherically symmetric metric with signature
$(- + + +)$,
\begin{equation}
	ds^2=-A(r)\,dt^2+\frac{dr^2}{B(r)}+r^2d\Omega^2.
	\label{eq:static-metric}
\end{equation}
Within a static region with $A>0$ and $B>0$, introduce the
ingoing Eddington--Finkelstein coordinate
\begin{equation}
	dv=dt+\frac{dr}{\sqrt{A(r)B(r)}}
	\label{eq:EF-coordinate}
\end{equation}
and define
\begin{equation}
	\Xi(r)\equiv\sqrt{\frac{A(r)}{B(r)}}.
	\label{eq:C-def}
\end{equation}
The resulting Eddington--Finkelstein form, which can then be continued
through a common-zero horizon under the regularity assumptions stated
below, is
\begin{equation}
	ds^2=-A(r)\,dv^2+2\Xi(r)\,dv\,dr+r^2d\Omega^2.
	\label{eq:EF-metric}
\end{equation}
We call a common-zero horizon $r=r_h$ regular in this chart when
$A(r_h)=B(r_h)=0$ and $\Xi$ admits a finite, strictly positive,
continuously differentiable extension to $r_h$. This assumption is
stronger and more useful than a ratio of first derivatives: it covers
both simple and higher-order common zeros without invoking an
indeterminate expression of the form $0/0$.

An adapted normalized null pair is
\begin{align}
	\ell^\mu&=(\partial_v)^\mu
	+\frac{A}{2\Xi}(\partial_r)^\mu,
	\label{eq:EF-l}\\
	n^\mu&=-\frac{1}{\Xi}(\partial_r)^\mu.
	\label{eq:EF-n}
\end{align}
It satisfies
\begin{align}
	\ell^2=n^2&=0,
	&\ell\cdot n&=-1,
	\notag\\
	\nabla_n n^\mu&=0,
	&\nabla_n\ell^\mu&=0.
	\label{eq:EF-normalization}
\end{align}
The expansions are
\begin{align}
	\theta_{(\ell)}
	&=\frac{A}{\Xi r}=\frac{\Xi B}{r},
	\label{eq:EF-theta-l}\\
	\theta_{(n)}&=-\frac{2}{\Xi r}.
	\label{eq:EF-theta-n}
\end{align}
The second form of $\theta_{(\ell)}$, obtained from
$A=\Xi^2B$, is the appropriate one for taking the regular horizon
limit. Acting with $n^\mu$ gives
\begin{equation}
	\Lie_n\theta_{(\ell)}
	=-\frac{1}{\Xi}\frac{d}{dr}
	\left(\frac{\Xi B}{r}\right).
	\label{eq:EF-C-general}
\end{equation}
At any regular common-zero horizon in the above sense, all terms
proportional to $B$ vanish, and one obtains the exact identity
\begin{equation}
	\Ccross_h
	\equiv\left.\Lie_n\theta_{(\ell)}\right|_{r_h}
	=-\frac{B'(r_h)}{r_h}.
	\label{eq:static-C}
\end{equation}
No simple-zero assumption was used. Consequently,
\begin{align}
	B'(r_h)>0&\Longleftrightarrow\text{future outer},
	\label{eq:Bprime-outer}\\
	B'(r_h)<0&\Longleftrightarrow\text{future inner},
	\label{eq:Bprime-inner}\\
	B'(r_h)=0&\Longleftrightarrow\text{degenerate},
	\label{eq:Bprime-degenerate}
\end{align}
provided the regular extension of $\Xi$ and the finiteness
assumptions of the cross-focusing equation hold. For a simple common
zero, the familiar relation
$\Xi_h^2=A'(r_h)/B'(r_h)$ follows as a special case; it is not used
for a degenerate horizon.

At the horizon, the area-weighted identity gives
\begin{equation}
	-1+\frac{r_h^2\left.\Pln\right|_{r_h}}{F_h}
	=-r_h B'(r_h).
	\label{eq:static-source-horizon}
\end{equation}
Thus the source equality $\Pln=F_h/r_h^2$ corresponds precisely to
the degenerate static case under the stated regularity assumptions,
while the strict outer and inner inequalities correspond to
$B'(r_h)>0$ and $B'(r_h)<0$, respectively.

For comparison, on an open static interval with $A>0$, $B>0$, and
$F>0$, define
\begin{equation}
	K^\mu=\left(\frac{1}{\sqrt A},\sqrt B,0,0\right),
	\qquad
	T_{KK}=T_{\mu\nu}K^\mu K^\nu.
	\label{eq:static-K}
\end{equation}
The same-direction radial projection yields the companion static law
\begin{equation}
	\frac{d}{dr}\left(\frac{B}{A}\right)
	=-\frac{r}{AF}
	\left(8\pi T_{KK}
	+K^\mu K^\nu\nabla_\mu\nabla_\nu F\right).
	\label{eq:static-ratio}
\end{equation}
This identity is restricted to connected static intervals and its
integrated form must retain the finite endpoint values of $B/A$.
By contrast, Eq.~\eqref{eq:area-weighted-cross-focusing} is regular
across the outer horizon and controls the evolution through a
nonstatic interior. The two laws are therefore complementary rather
than interchangeable.

{The relation between the static and double-null
	formulations is therefore complementary rather than a direct
	one-to-one dictionary. In the static analysis of
	Ref.~\cite{Munoz-Palma:2026dbd}, the ratio
	$Q\equiv B/A$ is governed, on each connected static interval, by the
	same-direction null projection
	$8\pi T_{KK}+K^\mu K^\nu\nabla_\mu\nabla_\nu F$.
	The present construction instead follows the transverse evolution of
	the outgoing expansion through the mixed source $\Pln$, and remains
	regular across a common-zero horizon and through the intervening
	nonstatic region. Likewise, the finite static endpoint datum
	$q_i=\lim_{r\to r_i}B/A$, which for simple common zeros reduces to
	$B'(r_i)/A'(r_i)$, should not be identified with the cross-focusing
	quantity $\Ccross_h=-B'(r_h)/r_h$: the former characterizes the
	relative normalization of the two static metric functions at the
	horizon, whereas the sign of the latter distinguishes outer, inner,
	and degenerate marginal horizons. The two approaches therefore probe
	different null projections of the same field equations and supply
	complementary information. They nevertheless share an important
	domain-of-validity requirement: the divided focusing relations require
	$F>0$. Accordingly, both analyses identify the zero of $F$ at
	$r=3M$ in the Multam\"aki--Vilja solution of
	Sec.~\ref{sec:MV-example} as an obstruction to extending the
	corresponding divided sign argument across the complete two-horizon
	region.}

\section{Consistency checks and illustrative solutions}
\label{sec:checks}

\subsection{General Relativity with a cosmological constant}

For
\begin{equation}
	f(R)=R-2\Lambda,
	\qquad
	F=1,
\end{equation}
the scalaron derivative terms vanish and
\begin{equation}
	FR-f=2\Lambda.
\end{equation}
Equation~\eqref{eq:P-invariant} therefore reduces to
\begin{equation}
	\Pln=8\pi T_{\ell n}+\Lambda.
	\label{eq:P-GR}
\end{equation}
At a future marginal sphere of the outgoing family,
Eq.~\eqref{eq:C-horizon} becomes
\begin{equation}
	\Ccross_H
	=-\frac{1}{r_H^2}
	+8\pi T_{\ell n}+\Lambda.
	\label{eq:C-GR}
\end{equation}

Equivalently, the area-weighted cross-focusing identity reads
\begin{equation}
	\Lie_n\!\left(r^2\theta_{(\ell)}\right)
	=-1+r^2\left(8\pi T_{\ell n}+\Lambda\right).
	\label{eq:area-weighted-GR}
\end{equation}
Thus, in General Relativity, the local outer/inner classification is
controlled by the competition between the spherical-curvature term
$1/r^2$, the mixed matter projection, and the cosmological constant.

{The recent no-inner-horizon theorem of
	Peng and An for static anisotropic black holes in Einstein
	gravity~\cite{Peng:2026baq} is based on a different mechanism from
	the one considered here. Their obstruction is formulated in terms of
	anisotropy among the transverse directions, encoded through
	differences of effective pressures and an associated radial integral
	condition inside the event horizon. The present GR limit,
	Eq.~\eqref{eq:C-GR}, instead follows from spherical cross-focusing
	and controls the transverse evolution of the outgoing null expansion
	through the mixed projection $T_{\ell n}$. Thus, the two results
	probe distinct structures of the Einstein equations: the
	Peng--An theorem exploits anisotropic transverse stresses in a
	static interior, whereas the present criterion is a horizon-regular
	mixed-null propagation statement applicable to spherical geometry.
	They should therefore be regarded as complementary no-inner-horizon
	diagnostics rather than as different formulations of the same
	condition.}

For Schwarzschild spacetime,
\begin{equation}
	T_{\ell n}=0,
	\qquad
	\Lambda=0,
\end{equation}
and hence
\begin{equation}
	\Ccross_H=-\frac{1}{r_H^2}<0.
\end{equation}
The Schwarzschild event horizon is therefore a nondegenerate future
outer marginal horizon in the orientation adopted here.

For Schwarzschild--de Sitter spacetime, consider the nondegenerate
parameter range
\begin{equation}
	0<9M^2\Lambda<1.
	\label{eq:SdS-nondegenerate-range}
\end{equation}
The static metric function is
\begin{equation}
	A(r)=B(r)
	=1-\frac{2M}{r}-\frac{\Lambda r^2}{3}.
	\label{eq:SdS-metric-function}
\end{equation}
The spacetime then possesses distinct black-hole and cosmological
horizons, denoted by $r_b$ and $r_c$, respectively. Since the
geometry is vacuum apart from the cosmological constant,
\begin{equation}
	\Ccross_H
	=\Lambda-\frac{1}{r_H^2}.
	\label{eq:C-SdS}
\end{equation}

The degenerate Nariai limit occurs at
\begin{equation}
	r_N=\frac{1}{\sqrt{\Lambda}},
	\qquad
	9M^2\Lambda=1.
\end{equation}
In the nondegenerate range,
\begin{equation}
	r_b<\frac{1}{\sqrt{\Lambda}}<r_c.
\end{equation}
Consequently,
\begin{align}
	\Ccross_b
	&=\Lambda-\frac{1}{r_b^2}<0,
	\notag\\
	\Ccross_c
	&=\Lambda-\frac{1}{r_c^2}>0.
\end{align}
Accordingly, the black-hole horizon is future outer, whereas the
cosmological horizon has the future-inner cross-focusing sign with
respect to the normalized null frame and orientation adopted here.

This classification agrees with the static horizon identity
$\Ccross_h=-B'(r_h)/r_h$. Indeed, using $B(r_h)=0$, one obtains
\begin{equation}
	B'(r_h)
	=\frac{1-\Lambda r_h^2}{r_h}.
\end{equation}
It follows that $B'(r_b)>0$ and $B'(r_c)<0$, reproducing the
respective outer and inner cross-focusing signs obtained from
Eq.~\eqref{eq:C-SdS}.

The cosmological horizon is not an inner black-hole horizon or a
Cauchy horizon. This example therefore does not test
Theorem~\ref{thm:no-inner}; rather, it verifies the local
cross-focusing classification and illustrates that the term
``future inner'' refers to the trapping-horizon sign and does not, by
itself, determine the global causal role of the marginal surface.

\subsection{Reissner--Nordstr\"om--(anti-)de Sitter}

For a radial Maxwell field,
\begin{equation}
	8\pi T_{\ell n}=\frac{Q^2}{r^4},
	\qquad
	T_{kk}=0
	\label{eq:Maxwell-projections}
\end{equation}
for either principal radial null direction $k^\mu$. Therefore,
\begin{equation}
	\Ccross_H=-\frac{1}{r_H^2}
	+\frac{Q^2}{r_H^4}+\Lambda.
	\label{eq:C-RNdS}
\end{equation}
This illustrates the distinction between same-direction and
cross-focusing: the radial Maxwell field saturates the same-direction
null projection but contributes positively to the mixed source.

Equivalently, the area-weighted cross-focusing identity becomes
\begin{equation}
	\Lie_n\!\left(r^2\theta_{(\ell)}\right)
	=-1+\frac{Q^2}{r^2}+\Lambda r^2.
	\label{eq:area-weighted-RNdS}
\end{equation}
At a marginal sphere, the outer, inner, and degenerate cases are
therefore distinguished according to whether
\begin{equation}
	\frac{Q^2}{r_H^2}+\Lambda r_H^2
\end{equation}
is smaller than, greater than, or equal to unity, respectively.

For asymptotically flat Reissner--Nordstr\"om, the two nonextremal
roots obey
\begin{equation}
	r_+r_-=Q^2,
	\qquad
	r_+>r_-.
\end{equation}
Equation~\eqref{eq:C-RNdS} gives
\begin{align}
	\Ccross_+
	&=\frac{r_--r_+}{r_+^3}<0,
	\label{eq:C-plus}\\
	\Ccross_-
	&=\frac{r_+-r_-}{r_-^3}>0.
	\label{eq:C-minus}
\end{align}
The outer event horizon and inner Cauchy horizon are therefore
classified correctly.

In particular, the Maxwell mixed stress satisfies the necessary
inner-horizon threshold at $r_-$:
\begin{equation}
	\frac{Q^2}{r_-^4}-\frac{1}{r_-^2}
	=\frac{r_+-r_-}{r_-^3}>0.
	\label{eq:RN-inner-threshold}
\end{equation}
Thus, the existence of the Reissner--Nordstr\"om inner horizon is
consistent with the source-reversal condition of
Proposition~\ref{prop:reversal}, while the sufficient bound of
Theorem~\ref{thm:no-inner} is necessarily violated at that horizon.

The same classification follows from the static identity
$\Ccross_h=-B'(r_h)/r_h$, with
\begin{equation}
	B(r)=1-\frac{2M}{r}+\frac{Q^2}{r^2}.
\end{equation}
Using $Q^2=r_+r_-$, one obtains
\begin{equation}
	B'(r_+)=\frac{r_+-r_-}{r_+^2}>0,
	\qquad
	B'(r_-)=-\frac{r_+-r_-}{r_-^2}<0.
\end{equation}
These relations reproduce Eqs.~\eqref{eq:C-plus}
and~\eqref{eq:C-minus}.

In the extremal limit,
\begin{equation}
	r_+=r_-=|Q|,
\end{equation}
one has
\begin{equation}
	\Ccross_H=0.
\end{equation}
The coincident horizon is therefore degenerate rather than strictly
outer or inner and lies outside the nondegenerate case addressed by
Theorem~\ref{thm:no-inner}.

For nonzero $\Lambda$, Eq.~\eqref{eq:C-RNdS} remains the appropriate
local classification formula at every regular nondegenerate root of
the Reissner--Nordstr\"om--(anti-)de Sitter metric function. The
global causal interpretation of each root must nevertheless be
established separately: the cross-focusing sign classifies a marginal
surface as outer or inner, but does not by itself determine whether
that surface is an event, cosmological, or Cauchy horizon.

\subsection{A scalaron-active charged inner-horizon benchmark}
\label{sec:Tang-benchmark}

{We emphasize before proceeding that
Theorem~\ref{thm:no-inner} and Proposition~\ref{prop:reversal} do not
depend on the specific geometry considered below: the theorem is a
statement about any regular null generator on which the source bound
holds, established independently of any particular solution. The
example that follows should therefore be read as a consistency check
verifying that the framework operates correctly on a genuine
nonconstant-curvature $f(R)$ solution, not as evidence in itself for
the theorem.}

A genuinely nonconstant-$F$ application is furnished by the exact
four-dimensional Maxwell--$f(R)$ solution of Tang, Wang, and
Papantonopoulos~\cite{Tang:2019qiy}. For $c_1=0$, $k=1$, and vanishing
cosmological constant, the relations $R(r)=1/r^2$ and
$f_{\rm mod}(r)=2c_2/r$ displayed in Ref.~\cite{Tang:2019qiy} give the
total Lagrangian
\begin{equation}
	f(R)=R+2c_2\sqrt{R},
	\qquad c_2<0,
	\label{eq:Tang-action}
\end{equation}
and hence $F=f_R=1+c_2r$. The corresponding spherical branch has
\begin{align}
	A(r)=B(r)&=\frac12+\frac{1}{3c_2r}
	+\frac{q^2}{4r^2},
	\label{eq:Tang-metric}\\
	R(r)&=\frac{1}{r^2},
	\qquad
	F(r)=1+c_2r.
	\label{eq:Tang-F}
\end{align}
Writing
\begin{equation}
	m\equiv-\frac{1}{3c_2}>0,
\end{equation}
the metric function becomes
\begin{equation}
	B(r)=\frac12-\frac{m}{r}+\frac{q^2}{4r^2}.
	\label{eq:Tang-metric-m}
\end{equation}
For
\begin{equation}
	0<q^2<2m^2,
	\label{eq:Tang-nonextremal}
\end{equation}
there are two distinct positive roots
\begin{equation}
	r_\pm=m\pm\sqrt{m^2-\frac{q^2}{2}},
	\qquad
	r_+r_-=\frac{q^2}{2}.
	\label{eq:Tang-roots}
\end{equation}
The region $r_-<r<r_+$ is future trapped in the ingoing
Eddington--Finkelstein orientation used above. Moreover,
\begin{equation}
	F(r)=1-\frac{r}{3m}>\frac13
	\qquad (r_-\leq r\leq r_+),
	\label{eq:Tang-F-positive}
\end{equation}
because $r_+<2m$. Thus the scalaron is nonconstant, strictly
positive, and regular throughout the complete inter-horizon segment.
The solution is used here as an exact local benchmark: $F$ vanishes
at $r=3m>r_+$, so this branch does not satisfy the positive-coupling
condition on its entire asymptotic exterior.

Here $\Xi=1$, so that
$n=-\partial_r$ and
$\ell=\partial_v+(B/2)\partial_r$. Rather than inferring the mixed
source from the geometric identity, it can be evaluated independently
term by term. With the Maxwell normalization of
Ref.~\cite{Tang:2019qiy}, the radial electric field gives
$8\pi T_{\ell n}=q^2/(4r^4)$. Since
$F=1-r/(3m)$ is linear in $r$, direct evaluation of the Hessian,
the d'Alembertian, and the curvature potential yields
\begin{align}
	8\pi T_{\ell n}
	&=\frac{q^2}{4r^4},
	\notag\\
	\nabla_\ell\nabla_nF
	&=\frac{B'(r)}{6m},
	\notag\\
	\Box F
	&=-\frac{1}{3m}
	\left(B'(r)+\frac{2B(r)}{r}\right),
	\notag\\
	\frac12(FR-f)
	&=\frac{1}{6mr}.
	\label{eq:Tang-source-pieces}
\end{align}
Substituting Eq.~\eqref{eq:Tang-metric-m} into these four contributions
gives
\begin{align}
	\Pln
	&=\frac{q^2}{4r^4}+\frac{B'}{6m}
	-\frac{1}{3m}\left(B'+\frac{2B}{r}\right)
	+\frac{1}{6mr}
	\notag\\
	&=F(r)\left(
	\frac{1}{2r^2}+\frac{q^2}{4r^4}
	\right).
	\label{eq:Tang-source-direct}
\end{align}
Hence the total source satisfies
\begin{equation}
	\frac{\Pln}{F}
	=\frac{1}{2r^2}+\frac{q^2}{4r^4},
	\label{eq:Tang-total-source}
\end{equation}
and the diagnostic defined in
Eq.~\eqref{eq:source-reversal-function} becomes
\begin{equation}
	\mathscr{S}(r)
	=-\frac12+\frac{q^2}{4r^2}.
	\label{eq:Tang-source-diagnostic}
\end{equation}
This source-side calculation is independent of the area-weighted
cross-focusing identity. As a geometric cross-check,
$r^2\theta_{(\ell)}=rB$ and $n=-\partial_r$ give
$\Lie_n(r^2\theta_{(\ell)})=-d(rB)/dr$, which reproduces exactly
Eq.~\eqref{eq:Tang-source-diagnostic}.

At the two horizons,
\begin{align}
	\mathscr{S}(r_+)
	&=\frac{r_--r_+}{2r_+}<0,
	\label{eq:Tang-S-plus}\\
	\mathscr{S}(r_-)
	&=\frac{r_+-r_-}{2r_-}>0.
	\label{eq:Tang-S-minus}
\end{align}
The outer horizon is therefore future outer and the inner horizon
future inner. This exact scalaron-active geometry realizes the source
reversal required by Proposition~\ref{prop:reversal}; correspondingly,
the sufficient bound of Theorem~\ref{thm:no-inner} fails before the
inner horizon is reached.

The integral balance is also satisfied explicitly. Since the affine
parameter increases inward and $d\lambda=-dr$,
\begin{align}
	\int_{\lambda_+}^{\lambda_-}\mathscr{S}\,d\lambda
	&=\int_{r_-}^{r_+}
	\left(-\frac12+\frac{q^2}{4r^2}\right)dr
	\notag\\
	&=0,
	\label{eq:Tang-integral-balance}
\end{align}
where Eq.~\eqref{eq:Tang-roots} was used. The example therefore checks
all three levels of the framework in a solution with dynamic curvature:
the local outer/inner signs, the necessary source reversal, and the
exact integrated compensation. It also shows why the theorem is a
conditional obstruction rather than a generic prohibition of charged
inner horizons in metric $f(R)$ gravity.

These properties are displayed explicitly in
Fig.~\ref{fig:Tang-scalaron-active}. Introducing the dimensionless
radius $x\equiv r/m$ and charge parameter
$\hat q\equiv q/m$, the figure uses the representative
nonextremal choice $\hat q=1$. It simultaneously shows the two
zeros of the metric function, the strict positivity of the scalaron
throughout the inter-horizon region, and the zero crossing of
$\mathscr{S}$ required by the source-reversal condition.

\begin{figure}[th!]
	\centering
	\includegraphics[width=\columnwidth]
	{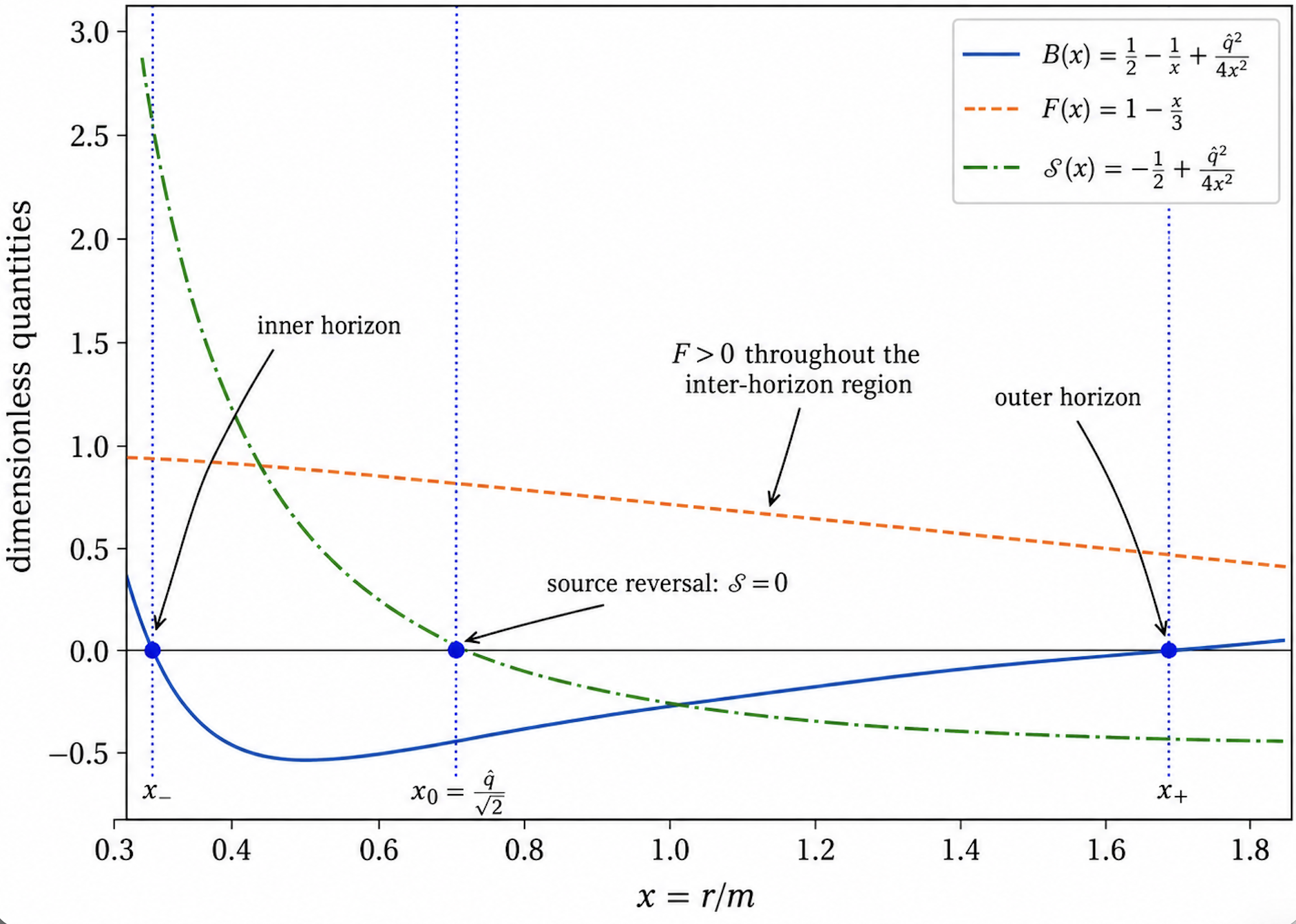}
	\caption{
		Exact scalaron-active realization of the inner-horizon mechanism
		for the charged $f(R)$ solution of
		Ref.~\cite{Tang:2019qiy}. We introduce
		$x\equiv r/m$ and $\hat q\equiv q/m$, and choose the
		representative nonextremal value $\hat q=1$.
		The solid curve shows
		$B(x)=1/2-1/x+\hat q^{\,2}/(4x^2)$, whose zeros
		$x_-$ and $x_+$ locate the inner and outer horizons,
		respectively. The dashed curve shows the scalaron
		$F(x)=1-x/3$, which remains strictly positive throughout
		the complete inter-horizon interval. The dash-dotted curve is
		the source diagnostic
		$\mathscr{S}(x)=-1/2+\hat q^{\,2}/(4x^2)$.
		It vanishes at
		$x_0=\hat q/\sqrt{2}$, with
		$x_-<x_0<x_+$, and changes from
		$\mathscr{S}>0$ near the inner horizon to
		$\mathscr{S}<0$ near the outer horizon.
		The figure therefore displays directly the scalaron positivity,
		the source reversal, and the two-horizon structure entering the
		exact benchmark.}
	\label{fig:Tang-scalaron-active}
\end{figure}

\subsection{\texorpdfstring{Constant-curvature $f(R)$ solutions}
	{Constant-curvature f(R) solutions}}

Let $R=R_0$ and $F=F_0>0$ be constant. Define
\begin{equation}
	\Lambda_{\rm eff}
	\equiv\frac{F_0R_0-f(R_0)}{2F_0}.
	\label{eq:Lambda-eff}
\end{equation}
Then
\begin{equation}
	\frac{\Pln}{F_0}
	=\frac{8\pi}{F_0}T_{\ell n}
	+\Lambda_{\rm eff},
	\label{eq:P-constant-curvature}
\end{equation}
and
\begin{equation}
	\Ccross_H=-\frac{1}{r_H^2}
	+\frac{8\pi}{F_0}T_{\ell n}
	+\Lambda_{\rm eff}.
	\label{eq:C-constant-curvature}
\end{equation}
Thus the criterion reduces to the General-Relativity expression with
a rescaled matter coupling and an effective cosmological constant.
Constant curvature does not activate the derivative part of the
scalaron mechanism.

Equivalently, the area-weighted cross-focusing identity becomes
\begin{equation}
	\Lie_n\!\left(r^2\theta_{(\ell)}\right)
	=-1+r^2\left(
	\frac{8\pi}{F_0}T_{\ell n}
	+\Lambda_{\rm eff}
	\right).
	\label{eq:area-weighted-constant-curvature}
\end{equation}
Hence a regular future inner marginal sphere must satisfy
\begin{equation}
	\frac{8\pi}{F_0}T_{\ell n}
	+\Lambda_{\rm eff}
	>\frac{1}{r_H^2},
	\label{eq:constant-curvature-inner-threshold}
\end{equation}
whereas the opposite strict inequality characterizes a future outer
marginal sphere.

The constant-curvature assumption must also be compatible with the
trace equation. Since $\Box F=0$, Eq.~\eqref{eq:trace} reduces to
\begin{equation}
	F_0R_0-2f(R_0)=8\pi T.
	\label{eq:constant-curvature-trace}
\end{equation}
Thus, if $R_0$ is constant, the matter trace must also be constant on
the region considered. In vacuum, or for traceless matter, one has
\begin{equation}
	F_0R_0-2f(R_0)=0,
\end{equation}
so that
\begin{equation}
	\Lambda_{\rm eff}=\frac{R_0}{4}.
	\label{eq:Lambda-eff-vacuum}
\end{equation}
Accordingly, constant curvature removes the derivative scalaron terms
but retains the algebraic curvature-potential contribution encoded in
$\Lambda_{\rm eff}$.

\subsection{\texorpdfstring{A nonconstant-curvature two-horizon solution and the positivity of $F$}{A nonconstant-curvature two-horizon solution and the positivity of F}}
\label{sec:MV-example}

Multam\"aki and Vilja found a vacuum solution with
\begin{align}
	A(r)&=1-\frac{2M}{r}-\frac{\Lambda r^2}{3},
	\label{eq:MV-A}\\
	B(r)&=\frac{A(r)}{2},
	\label{eq:MV-B}\\
	F(r)&=1-\frac{r}{3M}.
	\label{eq:MV-F}
\end{align}
For
\begin{equation}
	0<9M^2\Lambda<1,
	\label{eq:SdS-range}
\end{equation}
the redshift function has a black-hole root and a cosmological root,
with a static untrapped region between them. This pair is therefore
not an outer-event/inner-Cauchy-horizon configuration and is not an
application of Theorem~\ref{thm:no-inner}.

More precisely, the intended black-hole outer--inner application starts
from a future outer marginal surface and follows ingoing null generators
toward a possible future inner marginal surface. Such applications
normally traverse a future trapped region, although
Theorem~\ref{thm:no-inner} itself does not require
$\theta_{(n)}<0$ at every intermediate point. In the present solution,
the region between the black-hole and cosmological roots satisfies
$A>0$ and $B>0$ and is therefore static and untrapped. The second
root is a cosmological horizon, not an inner black-hole or Cauchy
horizon.

Figure~\ref{fig:MV-viability} summarizes the structure of this solution.
It shows the metric function $A(r)$ (equivalently $2B(r)$) together
with the scalaron $F(r)$, highlighting the two horizon radii and the
intermediate point $r=3M$ where $F$ changes sign.

\begin{figure}[th!]
	\centering
	\includegraphics[width=\columnwidth]{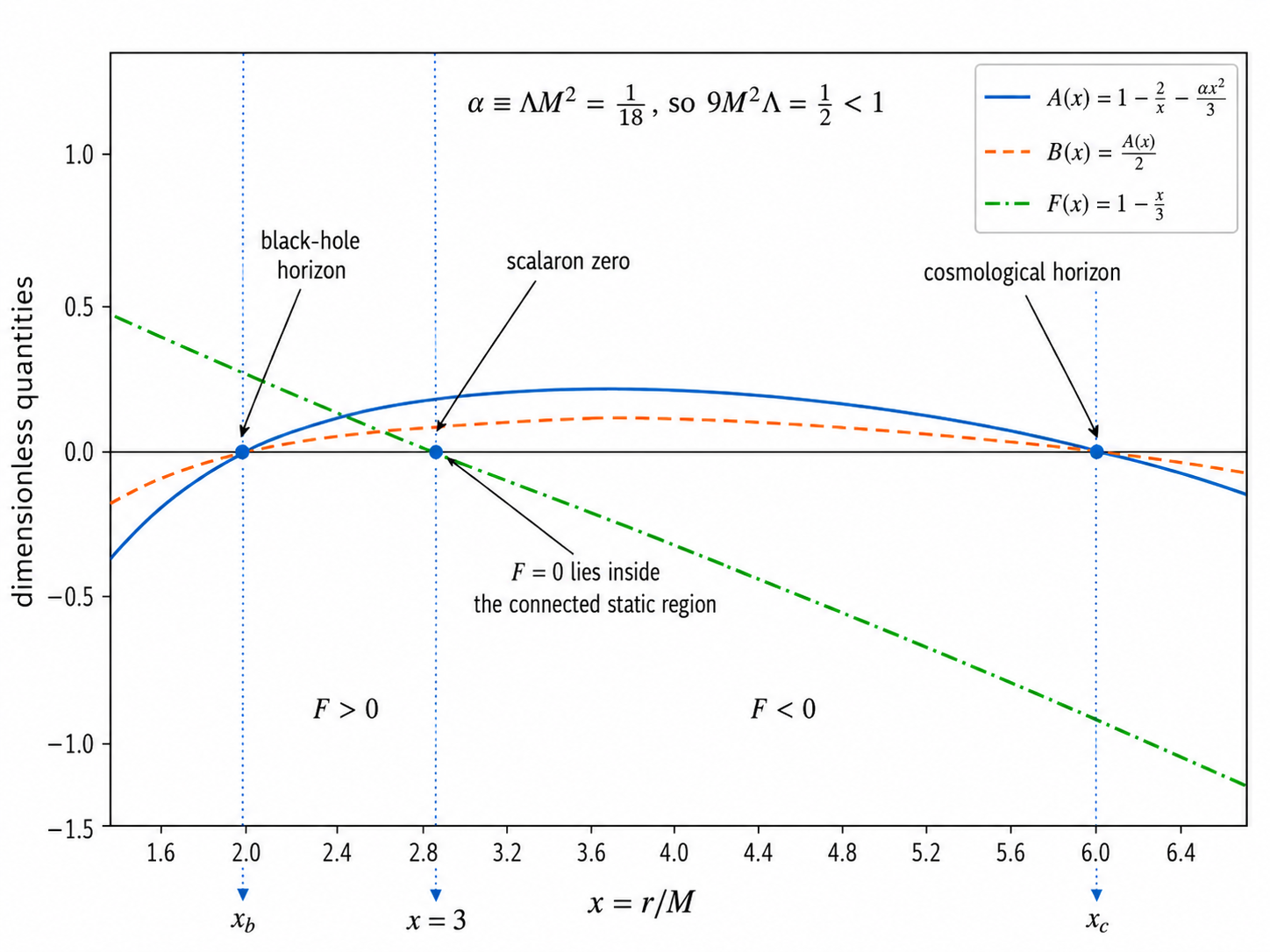}
	\caption{Scalaron-viability limitation in the two-horizon
		Multam\"aki--Vilja solution. The metric function $A(r)$
		(equivalently $2B(r)$) has two positive zeros, corresponding to a
		black-hole horizon and a cosmological horizon, while the scalaron
		$F(r)=1-r/(3M)$ vanishes at $r=3M$, which lies strictly between
		the two horizons in the nondegenerate range $0<9M^2\Lambda<1$.
		Thus the static interval connecting the two horizons contains a zero
		of $F$, so the divided Einstein-form equations and the source ratio
		$\mathcal{P}_{\ell n}/F$ are not defined throughout the whole
		region. The figure illustrates that this example is not a
		black-hole outer/inner-horizon configuration and instead serves as a
		global viability check for the positivity condition $F>0$.
	}
	\label{fig:MV-viability}
\end{figure}

Nevertheless, the solution provides a useful viability check. One has
\begin{equation}
	A(3M)=\frac{1-9M^2\Lambda}{3}>0,
	\label{eq:A-3M}
\end{equation}
so $r=3M$ lies between the two roots, while
\begin{equation}
	F(3M)=0.
	\label{eq:F-zero}
\end{equation}

Since $B(3M)=A(3M)/2>0$, the zero of $F$ lies strictly inside the
connected static region. Moreover,
\begin{equation}
	F(r)>0 \quad \text{for} \quad r<3M,
	\qquad
	F(r)<0 \quad \text{for} \quad r>3M.
	\label{eq:F-sign-MV}
\end{equation}
Thus no connected interval spanning both horizons satisfies the
positive-coupling hypothesis $F>0$.

As is clear from Fig.~\ref{fig:MV-viability}, the zero of the scalaron
occurs inside the same static interval that contains both horizon
radii. This is precisely what obstructs the use of the divided
cross-focusing equation on the full two-horizon region.

The Einstein-form equations obtained by dividing by $F$, and any
focusing inequality that contains $1/F$, are not defined through
that point. The original undivided field equations may
remain algebraically meaningful if the metric and scalaron are
regular, but the coefficient of the Ricci tensor vanishes at
$F=0$. Consequently, the effective gravitational coupling and the
divided source ratio $\Pln/F$ become degenerate there, and the
hypotheses of the cross-focusing theorem fail.

The correct lesson is therefore limited but important: scalaron
viability conditions must be checked globally on the region to which
a focusing argument is applied. The zero of $F$ is not
a mechanism by which this geometry evades
Theorem~\ref{thm:no-inner}; the black-hole--cosmological-horizon pair
lies outside the theorem's geometric setting from the outset. Rather,
the example shows that even an exact, apparently regular two-horizon
metric cannot be used in a divided $f(R)$ focusing argument unless
$F>0$ is verified throughout the complete domain under study.

\subsection{\texorpdfstring{Connection with dynamical $f(R)$ interiors}
	{Connection with dynamical f(R) interiors}}

Double-null simulations of charged collapse in $f(R)$ gravity have
found mass inflation and strong scalaron evolution near the
inner-horizon region~\cite{Hwang:2011kg,Guo:2015ira}. Other
collapse studies show that $F$ may approach zero or become unbounded
in high-curvature regimes, depending on the model and initial
data~\cite{Guo:2013dha,Guo:2015ira}.

{Since the physical motivation for this entire
	framework originates in the inner-horizon problem, it is worth
	stating plainly what the results established so far do and do not
	say about it. Theorem~\ref{thm:no-inner} and
	Proposition~\ref{prop:reversal} constrain the cross-focusing
	conditions under which a regular inner marginal horizon can exist at
	all; they do not determine whether, once present, such a horizon
	survives the nonlinear mass-inflation instability discussed in
	Sec.~\ref{sec:scope}. The present framework should therefore be
	understood as characterizing the geometric and source prerequisites
	for the existence of a regular inner marginal horizon, logically
	distinct from the subsequent question of its nonlinear stability.}

These results motivate evaluating the diagnostic of
Proposition~\ref{prop:reversal} directly in numerical data, but the
published simulations do not by themselves supply the complete
decomposition of $\Pln$ used here. Strong scalaron
evolution can modify the mixed derivative contribution to $\Pln$ and
thereby change the cross-focusing balance. A limit $F\to0$ invalidates
the divided equation and the positive-coupling hypothesis used in
Theorem~\ref{thm:no-inner}, whereas an unbounded $F$ does not by
itself violate $F>0$ but may signal a loss of regularity or render the
source ratio $\Pln/F$ singular or ill behaved.

The cited simulations do not, by themselves, establish whether the
bound $\Pln\leq F/r^2$ is satisfied, since this requires evaluating
the complete mixed source along the relevant ingoing null generators.
They instead motivate the use of the diagnostic quantity
\begin{equation}
	\mathscr{S}(\lambda)
	=\frac{r^2\Pln}{F}-1,
	\label{eq:dynamical-source-diagnostic}
\end{equation}
whose sign controls the area-weighted evolution of the outgoing
expansion through
Eq.~\eqref{eq:area-weighted-cross-focusing}. A direct application to
numerical collapse solutions would therefore require extracting
$r$, $F$, the matter projections, and the scalaron derivatives along
the same regular null segment.

The numerical literature also emphasizes that a Cauchy horizon
approached during mass inflation need not coincide with an inner
apparent horizon~\cite{Hwang:2011kg}. This
distinction is essential: a Cauchy horizon is defined globally as a
boundary of the domain of dependence, whereas an inner marginal or
trapping horizon is characterized quasi-locally by the null
expansions. This is precisely why the result proved here is
formulated first as a theorem about regular inner marginal horizons,
with the Cauchy-horizon statement given only under the additional
assumptions of Corollary~\ref{cor:Cauchy}.

\section{Scope, limitations, and physical interpretation}
\label{sec:scope}

Theorem~\ref{thm:no-inner} is a horizon-regular propagation
obstruction, but it is not an unconditional theorem about every
possible Cauchy boundary. Its scope is best summarized by separating
three distinct questions.

\paragraph{Existence of a regular inner marginal horizon}

Theorem~\ref{thm:no-inner} directly addresses the existence of a second
regular marginal sphere along a future-directed ingoing generator. If
the generator leaves a regular, nondegenerate future outer marginal
sphere, $F>0$, and Eq.~\eqref{eq:source-bound} holds throughout the
subsequent regular segment, then $r^2\theta_{(\ell)}$ remains strictly
negative and no later regular future marginal sphere of the outgoing
family can occur. The theorem does not require
$\theta_{(n)}<0$ at every intermediate point, although the intended
black-hole outer--inner application normally traverses a future trapped
region. Any regular nondegenerate future inner marginal horizon must
instead satisfy the strict local reversal condition
Eq.~\eqref{eq:necessary-inner}. Because the construction is entirely
double-null, these statements apply to both static and dynamical
spherically symmetric geometries.

\paragraph{Identification with a Cauchy horizon}

In Reissner--Nordstr\"om-type maximal analytic extensions, a static,
nondegenerate inner Killing horizon is marginal and also forms a Cauchy
horizon; under those additional global assumptions,
Corollary~\ref{cor:Cauchy} obstructs the corresponding regular inner
boundary. The identification is not automatic in a general spacetime:
a Cauchy horizon is a global boundary of a domain of dependence, whereas
an inner marginal or trapping horizon is defined quasi-locally by the
null expansions. Consequently, the absence of a regular future inner
marginal horizon implies the absence of a Cauchy horizon only when the
additional global hypotheses of the corollary are satisfied.

\paragraph{Stability and mass inflation}

The theorem concerns the existence of a regular marginal boundary and
the cross-focusing evolution leading to it; it does not address
nonlinear stability. Even when a background admits an inner Cauchy
horizon, perturbations may produce mass inflation and turn the horizon
into a weak or strong null
singularity~\cite{Poisson:1990eh,Ori:1991zz,Hwang:2011kg}.
Conversely, bounded scalar invariants do not by themselves establish
extendibility of the metric or determinism beyond the boundary. A
complete strong-cosmic-censorship analysis requires control of
perturbations, the differentiability of the limiting geometry, and the
structure of the maximal Cauchy development.

Within these limits, the source threshold has a direct interpretation:
$F/r^2$ is the scalaron-weighted intrinsic-curvature scale of the
symmetry spheres, and a regular nondegenerate future inner marginal
horizon requires $\Pln$ to exceed it locally. If an outer--inner pair
is connected by a regular generator, the diagnostic
$\mathscr{S}=r^2\Pln/F-1$ of
Eq.~\eqref{eq:source-reversal-function} must reverse sign and satisfy
the exact balance~\eqref{eq:source-balance}. In General Relativity,
electric charge can supply the required positive mixed stress; in
metric $f(R)$ gravity, matter, scalaron derivatives, and the algebraic
curvature potential may contribute jointly. A model-level
no-inner-marginal-horizon theorem therefore requires conditions strong
enough to maintain Eq.~\eqref{eq:source-bound} throughout the relevant
regular null segment. A zero or sign change of $F$ is different: it
invalidates the positive-coupling hypothesis and the divided
cross-focusing equation rather than providing an additional regular
source channel.

\section{Conclusions}
\label{sec:conclusions}

We have developed a horizon-regular double-null framework for analysing
regular inner marginal horizons in spherically symmetric metric $f(R)$
gravity. The key result is an exact area-weighted cross-focusing transport
law that remains regular across an outer horizon and through the
intervening nonstatic region, while keeping the physical matter,
scalaron-derivative, and algebraic $f(R)$ contributions explicitly
separated.

This leads to a finite-segment propagation obstruction. If the mixed
matter--scalaron source remains below the scalaron-weighted curvature
threshold along a regular future-directed ingoing generator issuing from a
nondegenerate future outer marginal sphere, the outgoing expansion cannot
recover to zero and no second regular marginal sphere of the same family
can occur. Conversely, a regular nondegenerate future inner marginal
sphere requires the source to exceed that threshold locally. A connected
outer--inner pair therefore necessarily entails a source reversal together
with an exact integrated compensation along the intervening segment. This
result does not require the entire intermediate region to remain future
trapped.

The static reduction provides an independent consistency check and shows
that the outer, inner, and degenerate cases can be classified in
horizon-regular Eddington--Finkelstein coordinates without assuming simple
zeros of the metric functions. The degenerate case is therefore obtained
from a well-defined regularity condition rather than from an indeterminate
ratio of vanishing derivatives.

The examples clarify both the geometrical and physical content of the
criterion. Schwarzschild, Schwarzschild--de Sitter, and
Reissner--Nordstr\"om reproduce the expected local classifications. The
charged case is especially instructive because the usual same-direction
null projection can saturate while the mixed stress still supplies the
cross-focusing contribution required to distinguish outer from inner
horizons. The exact scalaron-active $f(R)$ solution of
Sec.~\ref{sec:Tang-benchmark} provides a nontrivial modified-gravity
realization: the scalaron remains positive throughout the inter-horizon
region, while the mixed source undergoes the required reversal and
satisfies the integrated balance exactly. By contrast, the
Multam\"aki--Vilja solution illustrates that the condition $F>0$ must be
verified throughout the complete region on which the divided focusing
equation is applied.

{None of the examples above directly tests the
	sufficient direction of Theorem~\ref{thm:no-inner}, namely, the
	propagation obstruction itself, in a genuinely scalaron-active
	regime with nonconstant $F$. The GR-type examples
	(Schwarzschild, Reissner--Nordstr\"om, and Schwarzschild--de Sitter)
	have $F\equiv1$, while the two solutions with nonconstant $F$
	(Sec.~\ref{sec:Tang-benchmark} and Sec.~\ref{sec:MV-example}) either
	realize the necessary source-reversal behaviour of
	Proposition~\ref{prop:reversal} or fail the positive-coupling
	hypothesis somewhere in the relevant region. Identifying a genuine
	single-horizon $f(R)$ solution with nonconstant $F$ for which the
	bound $\Pln\leq F/r^2$ can be verified along the complete regular
	future-directed ingoing null segment issuing from the outer marginal
	horizon would provide a direct nontrivial benchmark of the sufficient
	obstruction and is left for future work.}

The broader significance of the framework is that it identifies the
specific local combination of matter and modified-gravity terms that must
change for a regular inner marginal boundary to form. It therefore provides
a direct bridge between quasi-local horizon geometry and the dynamical
content of the gravitational theory, while avoiding the ambiguity of
absorbing scalaron contributions into an effective matter source.

The scope of the result should nevertheless remain clear. A future inner
marginal horizon is a quasi-local object and need not coincide with a
Cauchy horizon. The corresponding Cauchy-horizon corollary applies only
when the candidate global boundary is also a regular nondegenerate future
inner marginal horizon reached through a regular double-null extension.
Extremal boundaries, nonmarginal Cauchy horizons, and mass-inflation
singularities require additional global and dynamical analysis.

Several extensions are natural. A particularly important next step is to
evaluate the mixed source directly in double-null simulations of viable
$f(R)$ collapse, thereby identifying whether matter, scalaron derivatives,
or the algebraic curvature potential drives the required reversal near a
candidate inner horizon. Such an analysis may lead to model-dependent
sufficient conditions excluding regular inner marginal horizons and could
also clarify how the present criterion behaves as an inner horizon evolves
toward a mass-inflation singularity. It would also be valuable to determine
whether analogous horizon-regular cross-focusing obstructions arise in
scalar--tensor, Horndeski, and more general higher-curvature theories.

The central lesson is that the existence of a regular inner marginal
horizon is not governed by a single effective energy condition, but by the
evolution of a definite mixed source along the interior null direction.
The horizon-regular formulation developed here turns this statement into a
quantitative propagation obstruction, a necessary source-reversal
condition, and an exact integrated balance, providing a systematic basis
for future analytical and numerical studies of inner horizons in modified
gravity. {Stated at its most general, the result identifies
	a geometric mechanism, the competition between the mixed
	matter--scalaron source and the scalaron-weighted spherical-curvature
	scale, that governs the propagation of regular inner marginal
	horizons in spherical metric $f(R)$ gravity. The mathematical form of
	this mechanism is independent of the detailed matter model and of the
	specific $f(R)$ Lagrangian, although whether the required source
	inequality is actually satisfied is necessarily model dependent.}

\acknowledgments{FSNL acknowledges funding from the Funda\c{c}\~{a}o para a Ci\^{e}ncia e a Tecnologia (FCT) grant UID/04434/2025, and acknowledges support from the FCT Scientific Employment Stimulus contract with reference CEECINST/00032/2018.}

\appendix

\section{Derivation of the double-null equations}
\label{app:derivation}

For the metric~\eqref{eq:double-null}, with
$\ell^\mu=\partial_v$ and
$n^\mu=e^{2\sigma}\partial_u$, the nonvanishing connection
coefficients in the two-dimensional $(u,v)$ sector that are needed
below are
\begin{equation}
	\Gamma^u{}_{uu}=-2\sigma_{,u},
	\qquad
	\Gamma^v{}_{vv}=-2\sigma_{,v}.
	\label{eq:Christoffel}
\end{equation}
The radial null Ricci components are
\begin{align}
	R_{uu}&=-\frac{2}{r}
	\left(r_{,uu}+2\sigma_{,u}r_{,u}\right),
	\label{eq:Ruu}\\
	R_{vv}&=-\frac{2}{r}
	\left(r_{,vv}+2\sigma_{,v}r_{,v}\right).
	\label{eq:Rvv}
\end{align}
The mixed Einstein component is
\begin{equation}
	G_{uv}=\frac{2r_{,uv}}{r}
	+\frac{2r_{,u}r_{,v}}{r^2}
	+\frac{e^{-2\sigma}}{r^2}.
	\label{eq:Guv}
\end{equation}

Since $G_{\ell n}=e^{2\sigma}G_{uv}$, while
\begin{equation}
	\Lie_n\theta_{(\ell)}
	=2e^{2\sigma}\left(
	\frac{r_{,uv}}{r}
	-\frac{r_{,u}r_{,v}}{r^2}
	\right),
	\label{eq:n-theta-coordinate}
\end{equation}
and
\begin{equation}
	\theta_{(\ell)}\theta_{(n)}
	=\frac{4e^{2\sigma}r_{,u}r_{,v}}{r^2},
\end{equation}
one obtains
\begin{equation}
	G_{\ell n}
	=\Lie_n\theta_{(\ell)}
	+\theta_{(\ell)}\theta_{(n)}
	+\frac{1}{r^2},
\end{equation}
which is Eq.~\eqref{eq:Gln-geometry}.

For a spherically symmetric scalar $F(u,v)$, direct evaluation of
the d'Alembertian gives
\begin{equation}
	\Box F=-2e^{2\sigma}
	\left[
	F_{,uv}
	+\frac{r_{,u}F_{,v}+r_{,v}F_{,u}}{r}
	\right].
	\label{eq:boxF}
\end{equation}
Furthermore, because the mixed connection coefficients
$\Gamma^\mu{}_{uv}$ vanish in the two-dimensional sector,
\begin{equation}
	\nabla_\ell\nabla_n F
	\equiv
	\ell^\mu n^\nu\nabla_\mu\nabla_\nu F
	=e^{2\sigma}F_{,uv}.
	\label{eq:Hln}
\end{equation}
Substitution of Eqs.~\eqref{eq:boxF} and~\eqref{eq:Hln} into
Eq.~\eqref{eq:P-invariant} yields Eq.~\eqref{eq:P-coordinate}.

The area-weighted cross-focusing identity follows from
\begin{equation}
	\Lie_n(r^2)=r^2\theta_{(n)}.
\end{equation}
Therefore,
\begin{align}
	\Lie_n\!\left(r^2\theta_{(\ell)}\right)
	&=r^2\Lie_n\theta_{(\ell)}
	+r^2\theta_{(\ell)}\theta_{(n)}
	\notag\\
	&=-1+\frac{r^2\mathcal{P}_{\ell n}}{F}.
\end{align}
This is the area-weighted identity used in the main obstruction
theorem.

For completeness, the affine property of $n^\mu$ follows from
\begin{align}
	n^u\partial_u n^u
	+\Gamma^u{}_{uu}(n^u)^2
	&=2\sigma_{,u}e^{4\sigma}
	-2\sigma_{,u}e^{4\sigma}
	\notag\\
	&=0.
\end{align}
Hence $\nabla_n n^\mu=0$. Since
$\ell^\mu=\partial_v$ has constant components and
$\Gamma^\mu{}_{uv}=0$ in the two-dimensional sector, one also has
\begin{equation}
	\nabla_n\ell^\mu=0.
\end{equation}
Thus the chosen null frame is parallel transported along the affinely
parametrized ingoing generators.

\end{document}